\documentclass[11pt]{article}

\usepackage[english]{babel}

\usepackage[letterpaper,top=2cm,bottom=2cm,left=2.5cm,right=2.5cm,marginparwidth=1.75cm]{geometry}

\usepackage{amsmath}
\usepackage{amssymb}
\usepackage{amsthm}
\usepackage{graphicx}
\usepackage[colorlinks=true, allcolors=blue]{hyperref}
\usepackage{xcolor}
\usepackage[normalem]{ulem}
\usepackage{float}

\usepackage[vlined]{algorithm2e}
\DontPrintSemicolon
\SetVlineSkip{1pt}
\NoCaptionOfAlgo
\SetKwIF{If}{ElseIf}{Else}{if}{:}{else if}{else:}{}%
\SetKwFor{For}{for}{:}{}%
\SetKwFor{While}{while}{:}{}%
\SetKwProg{Fn}{}{:}{}
\SetCommentSty{textrm}
\newcommand{\SPACE}{\vskip 0.45ex}%
\newcommand{\SPC}[1]{\vskip #1ex}%

\newtheorem{theorem}{Theorem}[section]

\newtheorem{corollary}[theorem]{Corollary}
\newtheorem{lemma}[theorem]{Lemma}

\theoremstyle{definition}

\newtheorem*{remark}{Remark}

\setlength{\parindent}{0pt}
\setlength{\parskip}{3pt plus 2pt}

\newcommand{\low}{{\mbox{\it low}}}
\newcommand{\out}{\mbox{\it out}}
\newcommand{\tip}{\mbox{\it tip}}
\newcommand{\pre}{\mbox{\it pre}}
\newcommand{\dfs}{\mbox{\it dfs}}
\newcommand{\dfexp}{\mbox{\it dfe}}
\newcommand{\dfe}{\mbox{\it dfe}}

\newcommand{\first}{\mbox{\it first}}
\newcommand{\previsit}{\mbox{\it previsit}}

\newcommand{\postvisit}{\mbox{\it postvisit}}

\newcommand{\retreat}{\mbox{\it retreat}}
\newcommand{\BACKWARD}{\mbox{\it BACKWARD\/}}

\newcommand{\guard}{\mbox{\it guard}}
\newcommand{\set}{\mbox{\it set}}
\newcommand{\stack}{\mbox{\it stack}}

\newcommand{\Advance}{\mbox{\it advance}}
\newcommand{\FORWARD}{\mbox{\it FORWARD\/}}

\newcommand{\push}{\mbox{\it push}}
\newcommand{\pop}{\mbox{\it pop}}
\newcommand{\Top}{\mbox{\it top}}
\newcommand{\NULL}{\mbox{\it null}}
\newcommand{\true}{\mbox{\it true}}
\newcommand{\false}{\mbox{\it false}}
\newcommand{\next}{\mbox{\it next}}
\newcommand{\visited}{\mbox{\it visited}}
\newcommand{\unvisited}{\mbox{\it unvisited}}
\newcommand{\treeadvance}{\mbox{\it tree-advance}}

\newcommand{\treeretreat}{\mbox{\it tree-retreat}}

\newcommand{\nontreetraverse}{\mbox{\it nontree-traverse}}
\newcommand{\leader}{\mbox{\it leader}}
\newcommand{\scan}{\mbox{\it scan}}

\newcommand{\makeset}{\mbox{\it makeset}}
\newcommand{\find}{\mbox{\it find}}
\newcommand{\unite}{\mbox{\it unite}}

\newcommand{\post}{\mbox{\it post}}

\newcommand{\ptr}{\mbox{\it ptr}}
\newcommand{\lowarc}{\mbox{\it lowarc}}

\newcommand{\start}{\mbox{\it start}}

\newcommand{\Time}{\mbox{\it time}}
\newcommand{\F}{F}
\newcommand{\LL}{L}
\newcommand{\lead}{\mbox{\it lead}}

\newcommand{\arc}{\mbox{\it arc}}

\newcommand{\OO}{\mathrm{O}}

\begin{document}

\title{Finding Strong Components Using Depth-First Search}
\author{Robert E. Tarjan \thanks{Department of Computer Science, Princeton University, NJ, USA.
Intertrust Technologies, Milpitas, CA, USA. E-mail: {\tt ret@princeton.edu}.  Research at Princeton University partially supported by by an innovation research grant from Princeton and a gift from Microsoft.} \and Uri Zwick \thanks{Blavatnik School of Computer Science, Tel Aviv University, Tel Aviv, Israel. E-mail: {\tt zwick@tau.ac.il}. Research supported by ISF grant no.\ 2854/20.}}

\maketitle

\begin{abstract}
We survey three algorithms that use depth-first search to find the strong components of a directed graph in linear time: (1) Tarjan's algorithm; (2) a cycle-finding algorithm; and (3) a bidirectional search algorithm.  
\end{abstract}

\bigskip\hfill
In memory of {\bf Pierre Rosenstiehl}, a great colleague, a good friend, and a big fan of depth-first search.

\section{Introduction}\label{S:intro}

In 1972 the first author~\cite{Tarjan72} presented linear-time algorithms that use depth-first search to solve two fundamental graph problems.  The first, which was developed jointly with John Hopcroft~\cite{HoTa73H}, finds biconnected components in an undirected graph.  The second finds strong components in a directed graph.  Although both algorithms are sequential, the strong components algorithm does two computations concurrently, and its correctness requires this concurrency.  This subtlety makes the algorithm especially intriguing.\footnote{``...Tarjan's algorithm for strong components is, without doubt, the algorithm that I love best.  When I learned of this elegant procedure in 1972, I understood for the first time that data structures can be ``deep.""~\cite[page~6]{Knuth21}. ``The data structures that he devised for this problem fit together in an amazingly beautiful way, so that the quantities you need to look at while exploring a directed graph are always magically at your fingertips.''~\cite{Knuth14}.} 

In this paper we develop \emph{Algorithm~T}, a streamlined version of Tarjan's strong components algorithm, from scratch. We incorporate improvements proposed by others in work following the publication of~\cite{Tarjan72}, and we include a few improvements of our own.  In addition, we study two alternative strong components algorithms that also use depth-first search and run in linear time, \emph{Algorithm~C}, which finds cycles and implicitly or explicitly contracts them, and \emph{Algorithm~B}, which does \emph{two} explorations of the graph, one forward, one backward.  We also investigate non-recursive implementations of depth-first search in general and of the three strong components algorithms in particular.

In a directed graph, two vertices $v$ and~$w$ are \emph{mutually reachable} if there is a path from~$v$ to~$w$ and a path from $w$ to~$v$.  Mutual reachability is an equivalence relation, so it partitions the vertices into equivalence classes, i.e., maximal sets of mutually reachable vertices, called the \emph{strongly connected components} or \emph{strong components}.  Our goal is to find the strong components of a directed graph fast.

For concreteness we shall assume that the input graph $G$ is represented by the set~$V$ of its vertices and, for each vertex~$v$, the set $\out(v)$ of arcs exiting~$v$.  Each arc~$a$ has a field $a.\tip$, the vertex~$a$ enters.  When we discuss low-level implementations, we shall assume that $\out(v)$ is represented as an endogenous\footnote{An \emph{endogenous} data structure is a linked structure in which the nodes of the structure \emph{are} the items stored in it, as opposed to an \emph{exogenous} structure, in which the nodes \emph{contain} the items~\cite{Tarjan83}.} singly-linked list whose first arc is $v.\first$, with each arc~$a$ having a field $a.next$, indicating the next arc after~$a$ on the list $\out(v)$ containing~$a$.  If~$a$ is the last arc on $\out(v)$, $a.next = \NULL$.  With minor changes, the algorithms we present also work if each set $\out(v)$ is represented by an array that stores $a.tip$ for each arc $a \in \out(v)$.  Except in Section~\ref{S-non-recursive-dfs}, we treat the graph representation as read-only.  We denote by $n$ the number of vertices and by $m$ the number of arcs in the graph.  In stating bounds we assume that $n = \OO(m)$, which is true for example if every vertex has at least one entering or exiting arc; vertices without entering or exiting arcs are singleton strong components and are easily identified.  We allow loops (arcs from a vertex to itself) and parallel arcs (multiple arcs from and to the same ordered pair of vertices).

\section{Depth-First Search and Its Properties}\label{S-dfs}

\subsection{Depth-first exploration and depth-first search}\label{S-dfs-definition}

A \emph{depth-first exploration} of a directed graph systematically visits its vertices and traverses its arcs.  Arc traversal is a two-part process, consisting of an \emph{advance} on the arc, followed later, possibly much later, by a \emph{retreat} on the arc.  We call an arc \emph{untraversed} if the advance on it has not yet occurred.  The exploration maintains a \emph{current vertex} $v$, initially $\NULL$.  To do a depth-first exploration, mark all vertices unvisited and all arcs untraversed, and repeat the appropriate one of the following three cases until all vertices are visited and all arcs are traversed:

\begin{itemize}
\item[(i)] The current vertex~$v$ is $\NULL$ and some vertex is unvisited.  Let $v$ be any unvisited vertex.  Visit~$v$.  This initiates a \emph{depth-first search} starting from~$v$, which visits all unvisited vertices reachable from~$v$ and traverses all arcs exiting these vertices.

\item[(ii)] The current vertex~$v$ is non-\NULL\ and has an exiting untraversed arc~$a$.  Choose such an arc, say~$a$ from~$v$ to~$w$, and advance on it.  If $w$ is visited, immediately retreat on~$a$.  If, on the other hand, $w$ is unvisited, make~$a$ the \emph{tree arc} entering $w$, set the current vertex $v$ equal to~$w$, and visit the new $v$.

\item[(iii)] The current vertex~$v$ is non-\NULL\ and has no exiting untraversed arcs.  If $v$ is the start vertex of the current search (it has no entering tree arc), set the current vertex $v$ equal to \NULL.  If, on the other hand, $v$ has an entering tree arc~$a$, say from~$u$, retreat on~$a$ and set the current vertex $v$ equal to~$u$.
\end{itemize}

Depth-first search is a purely local process: It reaches each current vertex from the previous one by advancing or retreating on an arc.  Trémaux (see Lucas~\cite{Lucas82}) proposed depth-first search on an undirected graph as a way to explore a maze, but its origins go back to ancient Greece and the legend of Ariadne, Theseus, and the Minotaur in the maze (\cite{Ovid}, book VIII).  Correctly specifying the algorithm and proving it correct are not entirely straightforward.  K{\"o}nig~\cite{Konig90} discusses Trémaux's method and related ones.   Trakhtenbrot~\cite{Trakhtenbrot63} gives a rigorous definition and proof of correctness of the algorithm.  Both K{\"o}nig and Trakhtenbrot consider only undirected graphs.  The full power of depth-first search as a tool to solve algorithmic graph problems only became clear in the early 1970's.

\begin{lemma}
A depth-first exploration generates a set of trees, one rooted at the start vertex of each depth-first search, whose arcs are the tree arcs defined by the exploration.  Each tree contains the set of vertices visited during the search that starts at the tree root.
\end{lemma}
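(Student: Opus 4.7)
The plan is to extract from the three rules of the exploration the invariant that every visit of a vertex is paired with at most one entering tree arc, and then to track how the ``current vertex'' moves through that structure. First, I would observe that a vertex becomes visited in exactly one of two ways: in case~(i), as the start vertex of a new search, in which case no tree arc is defined; or in case~(ii), at which moment a unique tree arc is declared to enter it. Because each of these cases requires the vertex to be previously unvisited, each non-start vertex receives exactly one entering tree arc, and each start vertex receives none. Interpreting tree arcs as parent pointers, this already shows that the tree arcs form a forest whose roots are precisely the start vertices of the individual depth-first searches.

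Next, I would verify that the set of vertices in the tree rooted at a start vertex $r$ coincides with the set of vertices visited during the search that begins at $r$. The easy direction is by induction on the order in which tree arcs are created during that search: if a tree arc $(u,w)$ is created in case~(ii), then by inductive hypothesis $u$ lies in both sets, and $w$ is visited at the very moment the arc is added, so $w$ lies in both sets as well. Hence every descendant of $r$ in the forest was visited during the search that started at $r$.

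For the reverse inclusion, I would prove by induction on the steps of the search rooted at $r$ that, as long as the current vertex is non-$\NULL$, it is a descendant of $r$ (including $r$ itself) in the forest. The base case is the instant $r$ is made current in case~(i). Case~(ii) moves the current vertex from a descendant of $r$ to a newly created tree-child, still a descendant of $r$. Case~(iii) either moves the current vertex along an entering tree arc to its parent, which remains a descendant of $r$ unless the current vertex was $r$ itself, in which case the search terminates. Since every vertex visited during the search is the current vertex at some moment, every such vertex lies in the tree rooted at $r$.

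The only delicate point, and the step I would treat most carefully, is the interaction of case~(iii) with the forest structure: one must confirm that the termination condition ``$v$ has no entering tree arc'' in~(iii) coincides with ``$v=r$'' during the search rooted at $r$, so that the search ends at exactly the right moment. But this is immediate from the first paragraph, which established that having no entering tree arc is equivalent to being a start vertex, and the induction above shows that the only start vertex involved in this search is $r$. Combining the two inclusions yields the lemma.
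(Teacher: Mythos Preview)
Your proposal is correct and follows essentially the same approach as the paper's (very terse) proof: uniqueness of entering tree arcs yields the forest structure, and tracking how the current vertex moves shows each tree coincides with the set of vertices visited in the corresponding search. The one place you are slightly looser than the paper is in asserting that ``parent pointers'' alone give a forest---the paper explicitly notes that a new tree arc enters a vertex with no outgoing tree arcs yet, ruling out cycles---but this follows immediately from the temporal ordering you already set up, so it is at worst an implicit step rather than a gap.
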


\begin{proof}
Since a vertex becomes visited only once, it has at most one entering tree arc.  Each new tree arc enters an unvisited vertex, which has no outgoing tree arcs, so there are no cycles of tree arcs.  Each vertex is either the start vertex of a search, in which case it has no entering tree arc and is the root of a tree, or it is not a start vertex, in which case it has an entering tree arc and is not a root.  If~$a$ is a tree arc from~$v$ to~$w$, the search that visits $v$ also visits $w$.    
\end{proof} 

We call the trees generated by a depth-first exploration the \emph{depth-first trees}.  Together they comprise the \emph{depth-first forest} generated by the exploration.  The exploration visits the vertices of each depth-first tree in preorder.  We call these visits the \emph{preorder visits}, or \emph{previsits}.  The exploration implicitly visits each vertex~$v$ a second time, in (iii) when $v$ is the current vertex but it has no outgoing untraversed arcs.  We call these visits the \emph{postorder visits}, or \emph{postvisits}, because they occur in postorder on each depth-first tree.  We assign to each of the $2n$ previsits and postvisits a numeric \emph{time}.  We denote the times of the previsit and postvisit of vertex $v$ by $v.\pre$ and $v.\post$, respectively.  The exact visit times do not matter, only their order: If one visit precedes another, the former must have a smaller time than the latter.  That is, the visit times are in the same order as the corresponding visits during the exploration.

The depth-first forest is in general not unique: It depends on the order in which start vertices are selected for depth-first searches and the order of the advances on the arcs out of each vertex.  But in many if not most applications, and to find strong components in particular, \emph{any} depth-first forest will do.

When discussing the depth-first forest, we use the following terminology: If~$a$ is a tree arc from~$v$ to~$w$, $v$ is the \emph{parent} of~$w$, and~$w$ is a \emph{child} of~$v$.  A vertex~$v$ is an \emph{ancestor} of a vertex~$w$, and~$w$ is a \emph{descendant} of~$v$, if there is a path of tree arcs from~$v$ to~$w$.  Such a path can contain no arcs, so each vertex is both a descendant and an ancestor of itself.  If the path contains at least one arc (that is, $v \neq w$), $v$ is a \emph{proper ancestor} of~$w$, and~$w$ is a \emph{proper descendant} of~$v$.  

A depth-first exploration partitions the non-tree, non-loop arcs into three types.  A non-tree arc~$a$ from~$v$ to~$w$ is a \emph{back arc} if~$w$ is a proper ancestor of~$v$, a \emph{forward arc} if $w$ is a proper descendant of~$v$, a \emph{cross arc} if~$v$ and~$w$ are unrelated; that is, neither is an ancestor of the other.  Cross arcs, and only cross arcs, can exit a vertex in one depth-first tree and enter a vertex in another depth-first tree.

\begin{remark} One could treat loops as either back arcs or forward arcs.  We adopt Knuth's suggestion (private communication, 2022) to treat them as neither, but as their own type.  Strong components are unaffected by the addition or deletion of loops, forward arcs, and parallel arcs, since adding or deleting any of these (except possibly the last member of a set of parallel arcs) does not change reachability.
\end{remark}

A depth-first search explicitly or implicitly maintains a path of tree arcs from the start vertex of the search to the current vertex.  Advancing on a tree arc extends this path by one arc; retreating on a tree arc shortens it by one arc.  We call this path the \emph{current path}. The vertices on the current path are exactly those that have been previsited but not postvisited, plus the current vertex.  When an advance on a tree arc occurs, the new current vertex is unvisited but is immediately previsited.  Just before a retreat on a tree arc occurs, the current vertex is postvisited.  A vertex is on the current path during the interval from just before it is previsited until just after it is postvisited, and at no other time. 

The arcs on the current path are exactly the tree arcs on which advances but not retreats have occurred.  There is at most one non-tree arc on which an advance but not a retreat has occurred.  If there is such an arc, say~$a$, $a$ exits the current vertex.  The retreat on~$a$ occurs immediately after the advance on~$a$.

\begin{lemma}\label{L:interval}
Each vertex $v$ is on the current path during the interval between its previsit and its postvisit, each of which happens once.  During this interval, the search advances and retreats on each arc~$a$ exiting~$v$.
\end{lemma}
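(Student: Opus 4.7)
The plan is to analyze how the \emph{current path} evolves under cases (i)--(iii) of the depth-first exploration. By inspection of the three cases, the only modifications to the current path are: appending a vertex at the moment of its previsit (in case (i) when it is chosen as a search root, or in case (ii) when a new tree arc enters it), and removing the current vertex at the moment of its postvisit (case (iii), whether we retreat on a tree arc or set the current vertex to $\NULL$). No other step alters the path --- in particular, a non-tree advance in (ii) is immediately followed by a retreat on the same arc and leaves the set of vertices on the path unchanged.

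Using this, I would first establish the one-shot nature of the visits. Each vertex is previsited at most once, because the transition from unvisited to visited is irreversible, and each previsit is tied to such a transition. Immediately after $v$'s previsit, $v$ is the current vertex and hence lies on the current path; by the path-evolution rules, $v$ can leave the current path only via its own postvisit. Conversely, $v$ can be postvisited only while it is the current vertex, which requires it to be on the current path. Thus the postvisit of $v$ happens at most once, and must occur at the moment $v$ is removed from the path; since the exploration terminates only when all vertices are visited and all arcs traversed, both events happen exactly once, and $v$ lies on the current path precisely during the interval between them.

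For the second claim, I would induct on the postvisit time of $v$. Throughout the interval between $v$'s previsit and postvisit, $v$ is on the current path, and the current vertex is either $v$ itself or lies strictly past $v$ on the path (a proper descendant of $v$). Whenever $v$ becomes the current vertex with an untraversed outgoing arc $a$, case (ii) fires and the exploration advances on $a$; if $a.\tip$ is already visited, we immediately retreat on $a$, and if not, $a$ becomes a tree arc entering $w = a.\tip$, and by the inductive hypothesis applied to $w$ the exploration eventually postvisits $w$ and retreats on $a$, restoring $v$ as the current vertex. Since case (iii) at $v$ is only triggered once $v$ has no untraversed exiting arcs, every $a \in \out(v)$ has been advanced on and retreated on before $v$'s postvisit. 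The main obstacle is precisely this ``eventually retreats'' step: it requires knowing that processing the subtree rooted at $w$ terminates and returns control to $v$, which is exactly where the induction hypothesis is invoked.
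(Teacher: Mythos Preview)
Your proposal is correct and follows essentially the same approach as the paper's proof: both argue that $v$ joins the current path at its previsit, leaves only at its postvisit, and that case~(iii) cannot fire at $v$ until every outgoing arc has been advanced on, with the retreat on a tree arc to $w$ occurring just after $w$'s postvisit. The only difference is that you make explicit the induction (on postvisit time) needed to guarantee that the sub-exploration from $w$ returns control to $v$, whereas the paper leaves this implicit in the phrase ``which cannot happen until $w$ is postvisited.''
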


\begin{proof}
A vertex~$v$ is previsited when it becomes the current vertex and joins the current path. It remains on the current path until it is postvisited. A vertex~$v$ cannot be postvisited until an advance has occurred on every arc~$a$ exiting~$v$. During each such advance, $v$ is the current vertex.  Suppose an advance occurs on such an arc, say~$a$ from~$v$ to~$w$. If $w$ is visited when the advance occurs, a retreat on it occurs immediately.  If not, $a$ becomes a tree arc and $w$ becomes the current vertex.  Vertex $v$ cannot be postvisited until it again becomes the current vertex, which cannot happen until~$w$ is postvisited.  A retreat on~$a$ occurs immediately after~$w$ is postvisited and hence before~$v$ is postvisited.
\end{proof}

\begin{lemma}\label{L:descendants} The following four conditions are equivalent:
\begin{itemize}
\item[\textup{(i)}]Vertex $w$ is a descendant of $v$
\item[\textup{(ii)}]$v.\pre \leq w.\pre < v.\post$
\item[\textup{(iii)}] $v.\pre < w.\post \leq v.\post$
\item[\textup{(iv)}] $v.\pre \leq w.\pre < w.\post \leq v.\post$
\end{itemize}
\end{lemma}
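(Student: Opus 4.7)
The plan is to establish the four equivalences via the implications $\text{(i)} \Rightarrow \text{(iv)} \Rightarrow \text{(ii)}$, $\text{(iv)} \Rightarrow \text{(iii)}$, and then close the circle with $\text{(ii)} \Rightarrow \text{(i)}$ and $\text{(iii)} \Rightarrow \text{(i)}$. The two implications from (iv) are immediate by dropping conjuncts, so only three substantive arguments are needed. The key structural fact I will invoke throughout is the one stated just before Lemma~\ref{L:interval}: at every moment the current path is a path of tree arcs from the search root to the current vertex, and a vertex $u$ lies on the current path precisely during the interval from just before $u.\pre$ until just after $u.\post$.

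For $\text{(i)} \Rightarrow \text{(iv)}$, I would induct on the length $k$ of the tree path $v = v_0, v_1, \ldots, v_k = w$. The base case $k = 0$ is trivial once we recall from Lemma~\ref{L:interval} that $v.\pre < v.\post$. For the inductive step, the hypothesis gives $v.\pre \leq v_{k-1}.\pre < v_{k-1}.\post \leq v.\post$. Since the tree arc from $v_{k-1}$ to $w$ is advanced while $v_{k-1}$ is the current vertex, $w$ is previsited at that instant (so $v_{k-1}.\pre < w.\pre$), and by Lemma~\ref{L:interval} applied to the exiting arcs of $v_{k-1}$, the retreat on that tree arc, and hence $w.\post$, occurs before $v_{k-1}.\post$. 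Chaining inequalities yields (iv).

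For $\text{(ii)} \Rightarrow \text{(i)}$, I would first dispose of the case $v = w$ (trivial) and then use distinctness of visit times to assume $v.\pre < w.\pre$. The hypothesis $v.\pre \leq w.\pre < v.\post$ says that at the moment $w$ is previsited, $v$ has been previsited but not yet postvisited, so by Lemma~\ref{L:interval} $v$ lies on the current path; $w$ also lies on it, at its tip, having just been made the current vertex. Because the current path is a tree path from the search root down to $w$, every other vertex on it is an ancestor of $w$, so $v$ is an ancestor of $w$. The argument for $\text{(iii)} \Rightarrow \text{(i)}$ is symmetric, taking the snapshot at time $w.\post$ instead: then both $v$ and $w$ still lie on the current path (with $w$ at its tip, about to be removed), again forcing $v$ to be an ancestor of $w$.

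The main thing to be careful about is the interplay between the strict and non-strict inequalities and the half-open nature of the ``on the current path'' interval: equality in (ii) or (iii) can only occur in the $v=w$ case, and once that case is handled the remaining comparisons are between distinct visit times, so the ancestor/descendant conclusion drops out cleanly. No deeper obstacle is anticipated, since all of the work is already encapsulated in Lemma~\ref{L:interval} and in the fact that the current path is a root-to-tip path of tree arcs.
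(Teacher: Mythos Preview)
Your proposal is correct and follows essentially the same route as the paper: both prove (i)$\Rightarrow$(iv), note that (iv) trivially yields (ii) and (iii), and then show (ii)$\Rightarrow$(i) and (iii)$\Rightarrow$(i) by observing that the hypothesis places both $v$ and $w$ on the current path at the moment of $w$'s previsit (respectively postvisit). The only cosmetic difference is that for (i)$\Rightarrow$(iv) the paper gives a one-line argument---$w$ is added to and removed from the current path while $v$ is still on it---whereas you spell out the same fact by induction on the tree-path length; both are fine.
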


\begin{proof}
If $w$ is a descendant of $v$, then $w$ must be added to and deleted from the current path while~$v$ is on the current path.  It follows from Lemma~\ref{L:interval} that $v.\pre \leq w.\pre < w.\post \leq v.\post$.  Hence (i) implies (ii), (iii), and (iv).

If $v.\pre \leq w.\pre < v.\post$, then $w$ is previsited while~$v$ is on the current path, so $w$ is a descendant of~$v$; that is, (ii) implies (i).  Similarly, if $v.\pre < w.\post \leq v.\post$, $w$ is postvisited while~$v$ is on the current path, so $w$ is a descendant of~$v$; that is, (iii) implies (i).  Finally, (iv) implies (ii) (and (iii)) and hence (i).
\end{proof}


\begin{lemma}\label{L:white-path}
A vertex $w$ is a descendant of a vertex $v$ if and only if there is a path of unvisited vertices from~$v$ to~$w$ just before $v$ is previsited.
\end{lemma}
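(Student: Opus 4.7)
The plan is to prove both implications separately, using Lemmas~\ref{L:interval} and~\ref{L:descendants} as the main tools.

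For the forward direction, suppose $w$ is a descendant of $v$. I would exhibit the unique tree path $v = u_0, u_1, \ldots, u_k = w$ from $v$ to $w$ as the desired white path. Each $u_i$ is a descendant of $v$, so by Lemma~\ref{L:descendants} we have $v.\pre \leq u_i.\pre$, meaning no $u_i$ has been previsited before $v$'s previsit. Hence every vertex on the tree path is still unvisited at the moment just before $v$ is previsited.

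For the converse, suppose there is a path $v = u_0, u_1, \ldots, u_k = w$ of unvisited vertices just before $v$ is previsited, and assume for contradiction that some $u_j$ is not a descendant of $v$. Let $i$ be the smallest such index; since $v$ is trivially its own descendant we have $i \geq 1$, and $u_{i-1}$ is a descendant of~$v$. By Lemma~\ref{L:descendants}, the interval $[u_{i-1}.\pre, u_{i-1}.\post]$ is nested inside $[v.\pre, v.\post]$, and by Lemma~\ref{L:interval} the search advances on the arc $a$ from $u_{i-1}$ to $u_i$ during $[u_{i-1}.\pre, u_{i-1}.\post]$, which lies strictly before $v.\post$. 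At the moment of this advance, either $u_i$ is still unvisited — in which case $a$ becomes a tree arc and $u_i$ becomes a child of $u_{i-1}$, hence a descendant of $v$, contradicting the choice of $i$ — or $u_i$ has already been previsited. In the second case, $u_i$ was unvisited at $v.\pre$ but previsited before the advance on $a$, so $v.\pre < u_i.\pre < v.\post$, and Lemma~\ref{L:descendants}(ii) again makes $u_i$ a descendant of $v$, a contradiction.

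The main subtlety will be in the backward direction: one must correctly invoke Lemma~\ref{L:interval} to guarantee that the arc $(u_{i-1}, u_i)$ is advanced on while $u_{i-1}$ is on the current path, and then use the nesting property from Lemma~\ref{L:descendants} to place this event strictly inside the interval $(v.\pre, v.\post)$. Once this nesting is pinned down, the case split on whether $u_i$ is visited at the moment of the advance closes the argument cleanly.
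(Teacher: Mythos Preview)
Your proof is correct and follows essentially the same approach as the paper: the forward direction exhibits the tree path via Lemma~\ref{L:descendants}, and the backward direction walks along the white path using Lemma~\ref{L:interval} to trap each successive vertex's previsit time inside $[v.\pre, v.\post]$. The only cosmetic difference is that the paper phrases the backward step as a forward induction along the path (and avoids your case split by observing directly that $u_i$ is previsited before $u_{i-1}$ is postvisited, regardless of whether the arc is a tree arc), whereas you phrase it as a minimal-counterexample argument; the underlying reasoning is identical.
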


\begin{proof}
If $w$ is a descendant of $v$, then by Lemma~\ref{L:descendants} all vertices on the path of tree arcs from~$v$ to~$w$ are unvisited just before~$v$ is previsited.  Suppose there is a path $P$ of unvisited vertices from~$v$ to~$w$ just before~$v$ is previsited.  If~$a$ from~$x$ to~$y$ is an arc on $P$ and~$x$ is a descendant of~$v$, then~$y$ is previsited before~$x$ is postvisited by Lemma~\ref{L:interval} and hence before~$v$ is postvisited by Lemma~\ref{L:descendants}.  Since~$y$ is previsited after $v$, $y$ is a descendant of~$v$.  It follows by induction on the number of vertices on $P$ that all of its vertices are descendants of~$v$, since $v$ is a descendant of itself.
\end{proof}

\begin{remark}
Cormen et al.~\cite{CLRS09} call the preceding lemma the \emph{white path theorem}.
\end{remark}


\begin{lemma}\label{L:descendant-exits-entries}
Let $v$ be a vertex and suppose there is an arc~$a$ from~$x$ to~$y$.  \textup{(i)} If~$x$ but not~$y$ is a descendant of~$v$, then $y.\pre < v.\pre$.  \textup{(ii)} If~$y$ but not~$x$ is a descendant of~$v$, then $v.\post < x.\post$.
\end{lemma}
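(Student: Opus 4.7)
The plan is to analyze the moment $t$ at which the search advances on the arc $a$ from $x$ to $y$. By Lemma~\ref{L:interval}, such a $t$ exists and satisfies $x.\pre < t < x.\post$, and at time $t$ either $y$ is unvisited (in which case $a$ becomes the tree arc making $y$ a child, hence a descendant, of $x$) or $y$ has already been previsited (in which case $y.\pre < t$). Both parts then follow by combining this dichotomy with Lemma~\ref{L:descendants}.

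For (i), the hypothesis that $x$ is a descendant of $v$ gives $v.\pre \leq x.\pre < t \leq v.\post$ via Lemma~\ref{L:descendants}. If $y$ were unvisited at time $t$, it would become a descendant of $x$, hence of $v$, contradicting the hypothesis. So $y$ must be already visited at time $t$, giving $y.\pre < t < v.\post$. Since $y$ is not a descendant of $v$, the contrapositive of condition~(ii) of Lemma~\ref{L:descendants} forbids $v.\pre \leq y.\pre < v.\post$; combined with $y.\pre < v.\post$, this forces $y.\pre < v.\pre$.

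For (ii), I would argue by contradiction, assuming $x.\post \leq v.\post$, so that $t < x.\post \leq v.\post$. If $y$ is unvisited at time $t$, then $y$ becomes a descendant of $x$; but $y$ is also a descendant of $v$, so on the unique tree path from a root down to $y$ both $v$ and $x$ appear, and therefore one is an ancestor of the other. Because $x$ is not a descendant of $v$, $v$ must be a proper descendant of $x$, giving $v.\post < x.\post$ and contradicting the assumption. If instead $y$ was already visited at time $t$, then $v.\pre \leq y.\pre < t < v.\post$, so $v$ has been previsited but not postvisited at time $t$ and therefore lies on the current path; since the current vertex $x$ also lies on the current path, $v$ must be an ancestor of $x$, again contradicting the hypothesis.

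The main subtlety, I expect, is in part~(i): the timing argument alone only yields $y.\pre < v.\post$, not the desired $y.\pre < v.\pre$, so one must invoke the contrapositive of Lemma~\ref{L:descendants} to exclude the middle range $[v.\pre, v.\post)$. In part~(ii), the delicate point is handling the ``already visited'' case, where one has to realize that the hypothetical inequality $x.\post \leq v.\post$ forces $v$ onto the current path at time $t$, which is incompatible with $x$ lying there but not being a descendant of $v$.
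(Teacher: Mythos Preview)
Your proof is correct. Part~(i) is essentially the paper's argument, just with the case ``$y$ unvisited at time~$t$'' handled explicitly rather than absorbed into the inequality $y.\pre < v.\post$.

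For part~(ii), your argument works but is more elaborate than the paper's. The paper avoids your case split entirely: from the single observation that $a$ is traversed before $x$ is postvisited it gets $y.\pre < x.\post$ (regardless of whether $y$ was already visited), and combining this with $v.\pre \leq y.\pre$ (from $y$ a descendant of~$v$) and the assumed $x.\post \leq v.\post$ yields $v.\pre < x.\post \leq v.\post$, which by Lemma~\ref{L:descendants}(iii) makes $x$ a descendant of~$v$. So the paper stays purely in the realm of numerical inequalities and Lemma~\ref{L:descendants}, whereas you bring in tree-path uniqueness and current-path structure. Your route is sound and perhaps more vivid, but the paper's is shorter and shows that Lemma~\ref{L:descendants} alone suffices.
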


\begin{proof}
This lemma is a corollary of Lemma~\ref{L:descendants}.  (i) Suppose~$x$ is a descendant of~$v$ and $v.\pre \leq y.\pre$.  Since~$x$ is a descendant of~$v$, the advance on~$a$ and hence the previsit of~$y$ occurs before the postvisit of~$v$. That is, $y.\pre < v.\post$.  This inequality and $v.\pre \leq y.\pre$ imply that~$y$ is a descendant of~$v$ by Lemma~\ref{L:descendants}. (ii) Suppose~$y$ is a descendant of~$v$ and $x.\post \leq v.\post$.  Since~$a$ is traversed before~$x$ is postvisited, $y.\pre < x.\post \leq v.\post$.  Since~$y$ is a descendant of~$v$, $v.\pre \leq y.\pre$.  These inequalities combine to give $v.\pre < x.\post \leq v.\post$.  By Lemma~\ref{L:descendants}, $x$ is a descendant of~$v$.      
\end{proof}

\begin{lemma}\label{L:cross-arcs}
If there is a cross arc from~$x$ to~$y$, then $y.\post < x.\pre$. 
\end{lemma}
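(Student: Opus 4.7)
The plan is to leverage the definition of a cross arc (neither endpoint is an ancestor of the other) together with Lemma~\ref{L:descendants} applied in both directions, plus the fact that the advance on $a$ must occur while $x$ is the current vertex and $y$ is already visited.

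First I would observe that, since $a$ is a non-tree arc from $x$ to $y$, the advance on $a$ occurs while $x$ is on the current path (by Lemma~\ref{L:interval}), and at the moment of the advance $y$ is already visited (otherwise the depth-first exploration would make $a$ a tree arc entering $y$, contradicting that $a$ is a cross arc). Thus $y$ is previsited before $x$ is postvisited, giving the inequality $y.\pre < x.\post$.

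Next I would use the two non-ancestor conditions. Because $y$ is not a descendant of $v=x$, Lemma~\ref{L:descendants} rules out $x.\pre \leq y.\pre < x.\post$; combined with $y.\pre < x.\post$ from the previous step, this forces $y.\pre < x.\pre$. Then, because $x$ is not a descendant of $v=y$, Lemma~\ref{L:descendants} rules out $y.\pre \leq x.\pre < y.\post$; combined with $y.\pre < x.\pre$ just established, this forces $y.\post \leq x.\pre$. Since all previsit and postvisit times are distinct, we conclude $y.\post < x.\pre$, as required.

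I do not foresee any real obstacle here; the only slightly delicate point is justifying that $y$ must already be visited when the advance on $a$ happens, but that is immediate from the case analysis (ii) in the definition of a depth-first exploration, which says that advancing on an arc into an unvisited vertex converts the arc into a tree arc. The rest is a mechanical bookkeeping of inequalities using Lemma~\ref{L:descendants}.
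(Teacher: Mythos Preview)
Your proof is correct and essentially matches the paper's. The paper's only shortcut is to cite Lemma~\ref{L:descendant-exits-entries}(i) (with $v=x$) to obtain $y.\pre < x.\pre$ in one step, whereas you derive this directly from Lemma~\ref{L:interval} and Lemma~\ref{L:descendants}---which is precisely how Lemma~\ref{L:descendant-exits-entries}(i) itself is proved; the second half (ruling out $x.\pre < y.\post$ via Lemma~\ref{L:descendants}) is identical in both arguments.
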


\begin{proof}
Since~$y$ is not a descendant of~$x$, Lemma~\ref{L:descendant-exits-entries}(i) with $v = x$ implies $y.\pre < x.\pre$.  Given that $y.\pre < x.\pre$, $x.\pre < y.\post$ would imply that~$x$ is a descendant of~$y$ by Lemma~\ref{L:descendants}, a contradiction.  Thus $y.\post < x.\pre$.     
\end{proof}

\subsection{Recursive view of depth-first search}\label{sub-recursive-dfs}

A natural way to express depth-first search is recursively: A depth-first search from an unvisited vertex~$v$ first previsits $v$, second \emph{scans} $v$, and third postvisits $v$.  To scan $v$, it processes each arc~$a$ exiting $v$.  First, it advances on~$a$; second, if $w = a.\tip$ is unvisited, it does a recursive depth-first search from $w$; third, it retreats on~$a$.  That is, an advance on a tree arc from~$v$ to~$w$ suspends the scan of~$v$ and starts a recursive depth-first search from $w$.  Only when the recursive search from $w$ is complete does the scan of~$v$ resume.  

Figure~\ref{F-dfs} implements depth-first exploration using recursive depth-first search. The main function, $\dfe(V)$, does a depth-first exploration of the graph whose vertex set is~$V$ and such that $\out(v)$ is the set of outgoing arcs of~$v$ for each $v \in V$.  The pseudocode contains stubs for the critical events during the exploration: $\start(V)$,  which does any initialization needed for the exploration; $\unvisited(v)$, a Boolean function that returns $\true$ if and only if vertex $v$ is unvisited; $\previsit(v)$, the previsit of~$v$; $\postvisit(v)$, the postvisit of~$v$; $\Advance(v, a, w)$, the advance on an arc~$a$ from~$v$ to~$w$;  $retreat(v, a, w)$, the retreat on an arc~$a$ from~$v$ to~$w$; $\treeadvance(v, a, w)$ and $\treeretreat(v, a, w)$, versions of advance and retreat restricted to tree arcs, and $\nontreetraverse(v, a, w)$, the advance and retreat on a non-tree arc~$a$ from~$v$ to~$w$.  There is no need for separate advance and retreat functions on non-tree arcs since the retreat on a non-tree arc immediately follows the advance on the arc.

Parameter $w$ in the arc traversal stubs is redundant, since $w = a.\tip$, but parameter $v$ is not.  To provide generality, we include both ends of an arc, as well as the arc itself, as parameters.  In an actual application, one would instantiate the stubs with appropriate computations, by defining each stub as a function or as a macro.  Making the stubs macros, so that they are replaced in-line by the appropriate computations, is more efficient, but making them functions allows the generic depth-first search implementation to be shared among several applications. In either case the application can ignore unused parameters in the stubs. The only stubs that must be non-empty are $\start(V)$, which must mark every vertex unvisited; $\previsit(v)$, which must mark~$v$ visited; and $\unvisited(v)$ which must correctly test whether $v$ is unvisited.  The application can do the same or different computations during the advances and retreats on tree and non-tree arcs, by instantiating the appropriate stubs.  Any computation that is the same for advances on tree and non-tree arcs can be done in the $\Advance$ stub, and similarly for retreats.  Algorithm~T, Tarjan's strong components algorithm, instantiates $\start$, $\unvisited$, $\previsit$, $\postvisit$, and $\retreat$, and omits the other stubs.  It does not use parameter~$a$ in $\retreat$, only $v$ and $w$.

The pseudocode includes model implementations of $\start$, $\visited$, $\previsit$, and $\postvisit$ that assign consecutive integer times to all $2n$ vertex visits starting from $1$.  $\start(V)$ initializes the previsit times to~$0$ and sets $\Time$, a global variable, to zero.  Each previsit or postvisit of a vertex $v$ increments $\Time$ and assigns the new value of $\Time$ to $v.\pre$ or $v.\post$, respectively.  A vertex $v$ is unvisited if and only if $v.\pre = 0$.  A simpler way to implement $\unvisited$ is to maintain a bit for each vertex that is initialized to $\false$ and set to $\true$ when the vertex is previsited.  An intermediate solution, which is what two of the strong components algorithms use, is to assign consecutive integer times to just the previsits.

The implementation in Figure~\ref{F-dfs} does not does not explicitly construct the depth-first forest, but it is easy to augment it to do so.  For example, each call of $\treeadvance(v, a, w)$ can set the parent of~$w$ equal to~$v$, or add~$a$ to a list of tree arcs out of~$v$, or both, depending on the application. 

\begin{figure}
\centering
\parbox{2in}{
\begin{algorithm}[H]
\Fn{$\dfexp(V)$}
{
    $\start(V)$ \;
    \For{$s\in V$}
    {\If{$\unvisited(s)$}{$\dfs(s)$ \;}}
}
\end{algorithm}
\vspace*{10pt}
\SetNoFillComment
\begin{algorithm}[H]
\Fn{$\dfs(v)$}{
    $\previsit(v)$ \;
    $\scan(v)$ \;
    $\postvisit(v)$ \;
} 
\end{algorithm}
}
\hspace*{-0.3in}
\parbox{2.8in}{
\begin{algorithm}[H]
\Fn{$\scan(v)$}
{
\For{$a\in \out(v)$}
{
    $w\gets a.\tip$ \;
    $\Advance(v,a,w)$ \;
    \SPACE 
    \eIf{$\unvisited(w)$}
    {
        $\treeadvance(v, a, w)$ \;
        $\dfs(w)$ \;
        $\treeretreat(v, a, w)$ \;
    }
    {   
        $\nontreetraverse(v, a, w)$ \;
    }
    \SPC{0.6} 
    $\retreat(v, a, w)$ \;
}
}
\end{algorithm}
}
\hspace*{-0.3in}
\parbox{1.95in}{
\begin{algorithm}[H]
\Fn{$\start(V)$}
{
    $\Time\gets 0$ \;
    \For{$v\in V$}
    {$v.\pre \gets 0$ \;}
}
\end{algorithm}
\vspace*{2pt}
\begin{algorithm}[H]
\Fn{$\unvisited(v)$}
{
    \Return $v.\pre=0$ \;
}
\end{algorithm}
\vspace*{2pt}
\begin{algorithm}[H]
\Fn{$\previsit(v)$}
{
    $\Time\gets \Time+1$ \;
    $v.\pre\gets \Time$ \;
}
\end{algorithm}
\vspace*{2pt}
\begin{algorithm}[H]
\Fn{$\postvisit(v)$}
{
    $\Time\gets \Time+1$ \;
    $v.\post\gets \Time$ \;
}
\end{algorithm}
}
\caption{A recursive implementation of a generic depth-first exploration.}\label{F-dfs}
\end{figure}

\section{Depth-first search and strong components}\label{S:dfs-and-sccs}

Not only does a depth-first exploration build a spanning forest with certain properties, it also imposes a structure on the strong components.  We exploit this structure to find these components fast.  Note that forward arcs do not affect reachability, so their deletion or addition does not change the strong components. 

\subsection{Properties of strong components}\label{sub-properties}

\begin{lemma}\label{L:strong-component-paths}
If $v$ and $w$ are in the same strong component~$S$, any path from~$v$ to~$w$ contains only vertices in~$S$.
\end{lemma}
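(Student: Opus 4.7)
The plan is to argue directly from the definition of strong components as equivalence classes under mutual reachability. Let $P$ be any path from $v$ to $w$, and let $u$ be an arbitrary vertex on $P$. It suffices to show that $u$ is mutually reachable with $v$, since then $u$ belongs to the equivalence class $S$ containing $v$.

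The key observation is that $P$ splits at $u$ into two sub-paths: a prefix $P_1$ from $v$ to $u$ and a suffix $P_2$ from $u$ to $w$. The prefix $P_1$ immediately gives a path from $v$ to $u$. For the other direction, I would concatenate $P_2$ (from $u$ to $w$) with a path from $w$ to $v$, which exists because $v$ and $w$ lie in the same strong component~$S$. This yields a path from $u$ to $v$. Hence $u$ and $v$ are mutually reachable, so $u \in S$.

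Since $u$ was arbitrary, every vertex on $P$ lies in $S$, as required. There is no real obstacle here; the proof is essentially an unpacking of the definition of mutual reachability together with the transitivity of reachability along concatenated paths.
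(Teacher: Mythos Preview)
Your proof is correct and essentially matches the paper's argument. The paper phrases it slightly more compactly by observing that the given path from $v$ to $w$ together with a path from $w$ back to $v$ forms a single cycle containing every intermediate vertex, but this is the same idea as your explicit splitting of the path at $u$ and concatenating with a return path.
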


\begin{proof}
If $v$ and $w$ are in~$S$, there is a path $P_1$ from $w$ to~$v$.  If~$x$ is on a path $P_2$ from~$v$ to~$w$, $x$ is on the cycle formed by $P_1$ followed by $P_2$ and hence is in~$S$.
\end{proof}

\begin{lemma}\label{L:strong-component-leaders}
In a strong component~$S$, the vertex $v$ with $v.\pre$ minimum is also the vertex with $v.\post$ maximum, and this vertex is an ancestor of every vertex in~$S$.
\end{lemma}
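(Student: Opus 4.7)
The plan is to prove the ancestor claim first and then read off the postvisit-maximum claim as an immediate corollary. Let $v$ denote the vertex of $S$ with smallest previsit time. For the ancestor claim, I would invoke the white-path theorem (Lemma~\ref{L:white-path}): it suffices to exhibit, for each $w \in S$, a path from $v$ to $w$ all of whose vertices are unvisited at the instant just before $v$ is previsited.

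To produce such a white path, fix any $w \in S$. Since $v$ and $w$ lie in the same strong component, there is a directed path from $v$ to $w$ in $G$. By Lemma~\ref{L:strong-component-paths}, every vertex on this path belongs to $S$. Because $v$ has the minimum previsit time among vertices of $S$, no vertex of $S$ has yet been previsited just before $v$ is previsited, so the path in question is entirely unvisited at that moment. Lemma~\ref{L:white-path} then concludes that $w$ is a descendant of $v$, establishing that $v$ is an ancestor of every vertex in $S$.

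For the postvisit claim, I would apply Lemma~\ref{L:descendants}: since every $w \in S$ is a descendant of $v$, we have $v.\pre \leq w.\pre < w.\post \leq v.\post$, and in particular $w.\post \leq v.\post$ for all $w \in S$. Hence the minimizer of $.\pre$ on $S$ is also the (strict, since $v \in S$) maximizer of $.\post$ on $S$.

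I do not anticipate a substantive obstacle; the only care needed is in verifying the white-path hypothesis at the precise moment just before $v$ is previsited. This is exactly where both the minimality of $v.\pre$ within $S$ and the confinement of $v\to w$ paths to $S$ (Lemma~\ref{L:strong-component-paths}) are essential — without the latter, some vertex on the $v \to w$ path could lie outside $S$ and already have been previsited, breaking the white-path hypothesis.
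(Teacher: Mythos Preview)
Your proof is correct and follows the same route as the paper: apply the white-path theorem (Lemma~\ref{L:white-path}) to show that the vertex $v$ of minimum $v.\pre$ in $S$ is an ancestor of every other vertex of $S$, then invoke Lemma~\ref{L:descendants} to deduce the $\post$-maximum claim. You are in fact more explicit than the paper's terse proof, which silently assumes the white path stays inside $S$; your explicit appeal to Lemma~\ref{L:strong-component-paths} is exactly the detail that justifies this.
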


\begin{proof}
This lemma is a corollary of Lemma~\ref{L:white-path}.  The vertex $v$ in~$S$ with $v.\pre$ minimum is the first vertex in~$S$ to be visited. By Lemma~\ref{L:white-path}, $v$ is an ancestor of all vertices in~$S$. By Lemma~\ref{L:descendants}, $v$ is also the vertex in~$S$ with $v.\post$ maximum.
\end{proof}

We define the \emph{leader} of a strong component to be its vertex $v$ with $v.\pre$ minimum (and $v.\post$ maximum).  The other vertices in the component are the \emph{followers} of the leader.  The leaders of the strong components characterize these components, as we shall now show.  Every root of a depth-first tree is the leader of a strong component, but component leaders need not be tree roots.

\begin{lemma}\label{L:follower-parents}
If~$y$ is not the leader of its strong component~$S$, then the parent~$x$ of~$y$ is also in~$S$. 
\end{lemma}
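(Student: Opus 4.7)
The plan is to exhibit $x$ as mutually reachable with $y$ by going ``up'' to the leader $\ell$ of $S$ along tree arcs and back down by a directed path that exists because $y, \ell \in S$.

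First I would note that $y$ has a parent at all: the remark immediately preceding the lemma says that every root of a depth-first tree is the leader of its strong component, so a non-leader such as $y$ cannot be a tree root, and therefore has a parent $x$. Next, let $\ell$ be the leader of $S$. By Lemma~\ref{L:strong-component-leaders}, $\ell$ is an ancestor of $y$; since $y \neq \ell$, it is a proper ancestor, so the unique tree path $\ell \to \cdots \to x \to y$ has at least one arc, and its penultimate vertex is $x$. In particular, $x$ is a descendant of $\ell$ (possibly equal to $\ell$), and $\ell$ reaches $x$ along tree arcs.

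To conclude $x \in S$, I would verify mutual reachability between $x$ and some vertex of $S$, say $y$ itself. In one direction, the tree arc gives the directed path $x \to y$. In the other direction, since $y, \ell \in S$ there is a directed path from $y$ to $\ell$ in the graph, which concatenated with the tree path from $\ell$ down to $x$ yields a directed path from $y$ to $x$. Hence $x$ and $y$ are mutually reachable and $x \in S$.

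The argument is short and the main step I expect readers to want spelled out is the first one: justifying that $y$ actually has a parent. This is not automatic from the hypothesis ``$y$ is not the leader of its strong component'' without invoking the (already-stated) fact that each tree root is a leader, so I would be explicit about that citation. Everything else is a direct appeal to Lemma~\ref{L:strong-component-leaders} and the definition of a strong component.
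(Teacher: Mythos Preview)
Your proof is correct and follows essentially the same route as the paper's: both use Lemma~\ref{L:strong-component-leaders} to get the tree path from the leader down to $y$ (which necessarily passes through $x$), combine it with a path from $y$ back to the leader inside $S$, and conclude that $x$ lies on the resulting cycle. The only differences are cosmetic---you verify mutual reachability of $x$ with $y$ rather than with the leader, and you explicitly justify that $y$ has a parent, a point the paper leaves implicit.
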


\begin{proof}
Let $v \neq y$ be the leader of~$S$.  There is a path from~$x$ to~$v$ consisting of the tree arc from~$x$ to~$y$ followed by a path from~$y$ to~$v$.  By Lemma~\ref{L:strong-component-leaders} there is a path of tree arcs from~$v$ to~$y$.  This path contains~$x$.  Hence~$x$ is in~$S$. 
\end{proof}

\begin{lemma}\label{L:strong-component-trees}
The strong components partition the depth-first trees into (partial) subtrees: If we delete from the depth-first forest every tree arc entering a strong component leader, the result is a collection of trees, each rooted at a component leader and whose vertices are the leader and its followers.
\end{lemma}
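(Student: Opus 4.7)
The plan is to verify three things: (a) the structure obtained after deletion is a forest, (b) every tree in the trimmed forest is rooted at a strong component leader, and (c) the vertex set of the tree rooted at leader $v$ is exactly $\{v\}$ together with the followers of $v$, which is the strong component $S$ of $v$. Part (a) is immediate since removing arcs from a forest still yields a forest.

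For (b), a vertex $r$ is a root of the new forest iff it has no incoming tree arc in the trimmed forest, which happens exactly when either $r$ was already a root of the original depth-first forest or its incoming tree arc was deleted. If the latter, $r$ is a leader by the definition of the deletion rule. If the former, note that any original tree root $r$ has no proper tree ancestor, but by Lemma~\ref{L:strong-component-leaders} the leader of its strong component is an ancestor, forcing that leader to equal $r$. So every root of the trimmed forest is a component leader.

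The main work is (c), which I would prove by two containments along surviving tree paths. For $S \subseteq$ (tree rooted at $v$): fix $y \in S$. By Lemma~\ref{L:strong-component-leaders}, $v$ is an ancestor of $y$, so there is a tree path $v=x_0, x_1, \dots, x_k=y$. Iterating Lemma~\ref{L:follower-parents} bottom-up (each non-leader vertex of $S$ has its parent in $S$) shows every $x_i$ lies in $S$; in particular, none of $x_1,\dots,x_k$ is a leader (their component's unique leader is $v$), so none of the tree arcs on this path is deleted, and the path survives in the trimmed forest. For the reverse containment, take $y$ in the trimmed tree rooted at $v$ and induct along the surviving tree path $v=x_0,\dots,x_k=y$. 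The base case $x_0=v\in S$ is immediate. Given $x_i\in S$, the vertex $x_{i+1}$ cannot be a leader, since otherwise the arc from $x_i$ to $x_{i+1}$ would have been deleted and the path would not survive; by Lemma~\ref{L:follower-parents}, its parent $x_i$ lies in the strong component of $x_{i+1}$, so the strong component of $x_{i+1}$ contains $x_i \in S$ and therefore equals $S$.

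Combining (a)–(c): the trimmed trees are pairwise disjoint (inherited from the forest), each is rooted at a leader, and the tree rooted at leader $v$ is exactly $S$. Since the strong components partition $V$, the trimmed trees partition the depth-first trees as claimed. The only real subtlety is in (c), where one must carefully apply Lemma~\ref{L:follower-parents} in both directions and not conflate leaders with tree roots; this is the step I expect to require the most care.
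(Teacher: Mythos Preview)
Your proof is correct and rests on the same key fact the paper uses, namely Lemma~\ref{L:follower-parents}; the paper's own proof is simply the one-line remark that the lemma follows immediately from Lemma~\ref{L:follower-parents}. Your write-up just unpacks that remark into the explicit forest/roots/vertex-set verifications, which is fine and arguably clearer, though you could safely compress it to a sentence or two.
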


\begin{proof}
The lemma follows immediately from Lemma~\ref{L:follower-parents}.
\end{proof}

\begin{theorem}\label{T:component-top-order}
Ordering the strong components in decreasing postorder of their leaders gives a topological order of the components: If~$a$ is an arc from~$x$ to~$y$, and $u$ and $v$ are the leaders of the strong components containing~$x$ and~$y$, respectively, $u.\post \geq v.\post$.    
\end{theorem}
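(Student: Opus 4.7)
The proof strategy is to analyze the state of the search at the instant it advances on the arc $a$ from $x$ to $y$.  The trivial case $u=v$ gives $u.\post = v.\post$, so I would assume $u \neq v$ and aim for the strict inequality $u.\post > v.\post$.  At the instant of the advance, the current vertex is $x$, and Lemma~\ref{L:strong-component-leaders} tells us $u$ is an ancestor of $x$; thus $u$ is on the current path, already previsited but not yet postvisited, so $u.\post$ is a time still to come.

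I would then split into three cases according to the status of $v$ at this same instant.  If $v$ is already postvisited, then $v.\post$ is in the past while $u.\post$ is in the future, so $v.\post < u.\post$ immediately.  If $v$ is still unvisited, then by Lemma~\ref{L:strong-component-leaders} the entire component of $v$ is untouched, so in particular $y$ is unvisited; consequently the advance on $a$ makes it a tree arc and $y$ a tree-descendant of $u$.  But then $y$'s tree-parent $x$ lies in $u$'s component, which differs from $v$'s component, so Lemma~\ref{L:follower-parents} forces $y$ itself to be the leader of its component, i.e., $v=y$.  Since $v=y$ is a proper descendant of $u$, Lemma~\ref{L:descendants} gives $v.\post < u.\post$.

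The third case, in which $v$ has been previsited but not yet postvisited, is the main obstacle: one might hope for an easy contradiction via visit times alone, but the clean argument is reachability-based.  In this case $v$ lies on the current path and hence is an ancestor of $x$, so the tree path $v \to x$ combined with the given arc $x \to y$ and any $y \to v$ path inside $v$'s strong component exhibits $v$ and $x$ as mutually reachable.  Then $v$ belongs to $x$'s strong component, which is $u$'s, so by uniqueness of the leader we would have $u=v$, contradicting our assumption.  This case therefore cannot occur, and the three cases together give $v.\post < u.\post$, completing the argument.
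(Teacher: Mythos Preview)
Your proof is correct, but it takes a different route from the paper.  The paper argues statically: it asks whether $x$ is a descendant of~$v$.  If so, the tree path $v\to x$ together with~$a$ and a $y\to v$ path forms a cycle, giving $u=v$.  If not, then~$y$ is a descendant of~$v$ while~$x$ is not, so Lemma~\ref{L:descendant-exits-entries}(ii) yields $v.\post < x.\post$, and since $u$ is an ancestor of~$x$ we get $u.\post \ge x.\post > v.\post$.  Two cases, one appeal to a lemma already in place.

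Your argument is a dynamic one, freezing the search at the advance on~$a$ and doing a three-way split on the visit status of~$v$.  This is perfectly sound and arguably more vivid, since it tracks what the algorithm actually sees.  Your Case~3 is essentially the paper's ``$x$ is a descendant of~$v$'' case, and your Cases~1 and~2 together cover ``$x$ is not a descendant of~$v$,'' but you re-derive the needed inequality by hand (via Lemma~\ref{L:follower-parents} in Case~2) rather than invoking Lemma~\ref{L:descendant-exits-entries}.  The paper's version is shorter because that lemma has already absorbed the case work; your version is more self-contained but does a bit more labor.
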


\begin{proof}
If~$x$ is a descendant of~$v$, then~$x$ and~$y$ are in the same strong component, because there is a cycle consisting of~$a$ followed by a path from~$y$ to~$v$ followed by the path of tree arcs from~$v$ to~$x$.  Hence $u=v$ and the lemma is true.  If~$x$ is not a descendant of~$v$, $u.\post \geq x.\post > v.\post$ by Lemmas~\ref{L:descendants} and~\ref{L:descendant-exits-entries}.
\end{proof}

Our next lemma characterizes strong component leaders. 

\begin{lemma}\label{L:follower-path}
A vertex~$v$ is the leader of a strong component if and only if within its strong component there is no path from~$v$ consisting of zero or more tree arcs followed by one non-tree arc to a vertex~$w$ with $w.\pre<v.\pre$. 
\end{lemma}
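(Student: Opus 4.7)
My plan is to prove each direction of the biconditional separately, using Lemmas~\ref{L:strong-component-paths}, \ref{L:strong-component-leaders}, \ref{L:strong-component-trees}, and \ref{L:descendant-exits-entries} as the main tools.

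For the forward direction (contrapositive), assume such a path from~$v$ exists, with endpoint~$w$ in the strong component~$S$ of~$v$ satisfying $w.\pre < v.\pre$. By Lemma~\ref{L:strong-component-paths}, every intermediate vertex on the path lies in $S$, so in particular $w \in S$. But then $w.\pre < v.\pre$ contradicts the defining property that the leader is the vertex in $S$ of minimum previsit time. Hence $v$ cannot be the leader.

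For the reverse direction, assume $v$ is not the leader of its strong component~$S$, and let $u \neq v$ be the actual leader. By Lemma~\ref{L:strong-component-leaders}, $u$ is a proper ancestor of~$v$, so in particular $u$ is not a descendant of~$v$. Since $v, u \in S$, there is a path $Q$ from~$v$ to~$u$, and by Lemma~\ref{L:strong-component-paths} every vertex of $Q$ lies in~$S$. Walking along~$Q$, let $x \to y$ be the first arc whose head~$y$ is not a descendant of~$v$ (such an arc exists because $u$ is not a descendant of~$v$); then $x$ is a descendant of~$v$. The arc $x \to y$ cannot be a tree arc (a tree arc out of~$x$ would enter a child of~$x$, still a descendant of~$v$), and Lemma~\ref{L:descendant-exits-entries}(i) applied with reference vertex~$v$ gives $y.\pre < v.\pre$.

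It remains to produce the desired prefix: a tree path from~$v$ to~$x$ whose vertices all lie in~$S$. Such a tree path exists because $x$ is a descendant of~$v$, so I only need to check that each intermediate vertex~$t$ on it is in~$S$. Clearly $v$ reaches~$t$ (by the tree path). Conversely, $t$ reaches~$v$ by first going down the tree from~$t$ to~$x$, then taking the non-tree arc to~$y$, then continuing along the tail of $Q$ to~$u$, and finally using the tree path from~$u$ back down to~$v$ guaranteed by Lemma~\ref{L:strong-component-leaders}. Hence $t \in S$, which completes the construction.

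The only subtlety—really the only thing to get right—is exactly this last point: the naive choice of ``first non-tree arc on~$Q$'' is not enough, because its head could still be a descendant of~$v$ (and hence have larger previsit time). Choosing instead the first arc that escapes the subtree rooted at~$v$ is what makes Lemma~\ref{L:descendant-exits-entries}(i) applicable and simultaneously guarantees that the tree path from~$v$ down to the tail of this arc stays inside~$S$.
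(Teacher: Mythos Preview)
Your proof is correct and follows essentially the same route as the paper's: take a path from~$v$ to the leader~$u$ inside the component, find the first arc $x\to y$ whose head leaves the subtree rooted at~$v$, and use Lemma~\ref{L:descendant-exits-entries}(i) to get $y.\pre < v.\pre$. The one place you go beyond the paper is in explicitly verifying that the tree path from~$v$ down to~$x$ stays inside~$S$; the paper simply asserts this, relying implicitly on Lemma~\ref{L:strong-component-trees} (the component is a subtree of the DFS forest, so the tree path between two of its vertices cannot leave it). Your direct mutual-reachability argument for each intermediate~$t$ is a valid alternative, just a bit more work. Your forward direction is also fine, though the appeal to Lemma~\ref{L:strong-component-paths} is redundant once the path is already assumed to lie within the component.
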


\begin{proof} If $v$ is the leader of a strong component, there is \emph{no} path from~$v$ within the component to a vertex $w$ with $w.\pre<v.\pre$.  If $v$ is not the leader of its strong component, then there is a path within the component from~$v$ to the component leader, say $u$, and $u.\pre<v.\pre$.  Let~$y$ be the first vertex on this path that is not a descendant of~$v$, and let~$a$ from~$x$ to~$y$ be the arc on this path that enters~$y$.  Since~$x$ is a descendant of~$v$, there is a path of zero or more tree arcs from~$v$ to~$x$.  Since~$y$ is not a descendant of~$v$, $y.\pre < v.\pre$ and~$a$ is a non-tree arc by Lemma~\ref{L:descendant-exits-entries}. \end{proof}

In the statement of Lemma~\ref{L:follower-path} the requirement ``within the component" is critical; without it the lemma is false.  The crux of Algorithm~T is the ability to test this requirement efficiently. 

The exact structure of the path in the proof of the lemma is less important than the behavior of the depth-first exploration on the path.  We call a path \emph{retreating} if the exploration retreats on its arcs in the reverse of the order of the arcs along the path.  As a special case, a path of no arcs from a vertex to itself is a retreating path.  Any path $P$ of zero or more tree arcs followed by one non-tree arc~$a$ from~$x$ to~$y$ is retreating, because when the advance on~$a$ occurs $P$ excluding~$a$ is part or all of the current path, and~$x$ is the current vertex.  Among the arcs on $P$, the retreat on~$a$ occurs first, followed later by retreats on the tree arcs on $P$ in the reverse of their order on $P$.  Thus the path constructed in the proof of Lemma~\ref{L:follower-path} is retreating, giving us the following corollary:

\begin{lemma}\label{L:low-path}
A vertex~$v$ is a leader of a strong component if and only if within its component there is no retreating path from~$v$ to a vertex $w$ with $w.\pre<v.\pre$. 
\end{lemma}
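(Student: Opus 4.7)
The plan is to derive Lemma~\ref{L:low-path} as a corollary of Lemma~\ref{L:follower-path}, using the observation already argued in the paragraph preceding the lemma that any path consisting of zero or more tree arcs followed by a single non-tree arc is automatically a retreating path.

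For the forward direction, I would give a one-line argument: if $v$ is the leader of its strong component $S$, then by definition $v$ has the minimum previsit time in $S$, so there is no path within $S$ at all from $v$ to a vertex $w$ with $w.\pre<v.\pre$, and in particular no retreating such path.

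For the reverse direction I would argue the contrapositive. Suppose $v$ is not the leader of its strong component. Then Lemma~\ref{L:follower-path} supplies a path $P$ from $v$, within the component, consisting of zero or more tree arcs followed by exactly one non-tree arc $a$ from some $x$ to some $w$ with $w.\pre<v.\pre$. I would then invoke the paragraph's argument that such a $P$ is retreating: at the moment the advance on $a$ occurs, the tree-arc prefix of $P$ is part of the current path with $x$ as current vertex; the retreat on $a$ happens immediately (since $w$ is already visited, because $w.\pre<v.\pre$ and $v$ itself is already previsited), and thereafter the tree arcs of $P$ are retreated in reverse order as the search unwinds the current path. This furnishes the required retreating path, completing the contrapositive.

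There is no significant obstacle: the work has already been done in Lemma~\ref{L:follower-path} and in the paragraph defining retreating paths. The only nuance worth stating explicitly in the write-up is why the advance on $a$ triggers an immediate retreat, namely that $w$ has a smaller previsit time than $v$ and hence is already visited at that moment; this confirms $a$ is traversed as a non-tree arc in the order required for $P$ to qualify as retreating.
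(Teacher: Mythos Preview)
Your proposal is correct and follows essentially the same approach as the paper: the paper also presents Lemma~\ref{L:low-path} as an immediate corollary of Lemma~\ref{L:follower-path}, using the preceding paragraph's observation that a path of zero or more tree arcs followed by one non-tree arc is retreating. Your write-up is slightly more explicit (spelling out the forward direction and why the retreat on~$a$ is immediate), but the substance is identical.
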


\subsection{Finding leaders and components}

We shall use Lemma~\ref{L:low-path} to identify strong component leaders.  We define the previsit times $v.\pre$ to be  consecutive integers starting from $1$.  (The strong components algorithm does not use postvisit times, but these can be chosen to be rational numbers so that all previsits and postvisits are correctly ordered by time.\footnote{Define the time of a postvisit to be the (integer) time of the preceding previsit plus $1/(n+1)$ multiplied by the number of postvisits so far.})   We compute, for each vertex~$v$, a value $v.\low$ equal to the minimum $w.\pre$ such that $w$ is reachable from~$v$ by a retreating path within the strong component containing $v$.  To compute the $\low$ values, we initialize $v.\low \gets v.\pre$ when $v$ is previsited.  We update $v.\low$ when retreating on an arc~$a$ from~$v$ to~$w$, by setting $v.\low \gets \min\{v.\low, w.low\}$ \emph{if} $v$ and~$w$ are in the same component.  When $v$ is postvisited, retreats have already occurred on all arcs on retreating paths that start from~$v$, so the computation of $v.\low$ is complete.  Thus $v$ is the leader of a strong component if, and only if, $v.\low = v.\pre$ when $v$ is postvisited.

Of course we cannot \emph{yet} actually do this computation, because we have no way yet of testing whether two vertices are in the same component, so we have no way of knowing whether to update $v.\low$ when retreating on an arc exiting $v$.  But if we find that some vertex $v$ is the leader of a strong component, and we can find the set of vertices in its component, then we can exclude these vertices from computations of $\low$ values of vertices in other strong components.  We do this by setting the $\low$ values of all vertices in the just-found component to~$\infty$ (or to a sufficiently large number).  Then these values do not contribute to later $\low$ updates.  This allows us to update $v.\low$ when retreating on an arc exiting $v$ unconditionally, without explicitly testing whether the ends of the arc are in the same strong component.

Let us restate this computation and prove that it is correct.  When previsiting a vertex~$v$, we set $v.\pre$ equal to the next available natural number, and we initialize $v.\low \gets v.\pre$.  When retreating on an arc from~$v$ to~$w$, we set $v.\low \gets \min\{v.\low, w.\low\}$.  When postvisiting a vertex~$v$, if $v.\low = v.\pre$ we set $w.\low \gets \infty$ for each vertex~$w$ in the strong component whose leader is $v$.  We shall explain later how to find the vertices in the strong component whose leader is $v$.  Excluding this important detail, this is the \emph{$\low$ computation}.

\begin{lemma}\label{L:low-values-correct}
The $\low$ computation maintains the following invariant: After $v$ is previsited but before the leader $u$ of the strong component containing $v$ is postvisited, $v.\low$ is the minimum $w.\pre$ of a vertex $w$ reachable from~$v$ by a retreating path in the same component as $v$ on all of whose arcs a retreat has occurred; after $u$ is postvisited, $v.\low = \infty$.
\end{lemma}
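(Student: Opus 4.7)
The plan is to induct on the sequence of events produced by the depth-first exploration, namely previsits, retreats on individual arcs, and postvisits, since these are the only events that can change any $\low$ value or enlarge the set of arcs on which a retreat has occurred. I would maintain the invariant simultaneously for every vertex and verify it is preserved by each event type. At a previsit of $v$ the algorithm sets $v.\low \gets v.\pre$; no arc out of $v$ has yet been retreated on, so the only retreating path from $v$ whose arcs have all been retreated on is the trivial empty path, which contributes $v.\pre$, matching $v.\low$.

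The main step is the retreat on an arc $a$ from $v$ to $w$, where $v.\low \gets \min\{v.\low, w.\low\}$. I split on whether the leader $u'$ of $w$'s component has already been postvisited. If it has, then $u'$ is no longer on the current path, while the leader $u$ of $v$'s component is still on it (since $v$ is not yet postvisited), so $u \neq u'$ and $v,w$ lie in different components. By the inductive hypothesis $w.\low = \infty$, so $v.\low$ does not change; and by Lemma~\ref{L:strong-component-paths} any path beginning with $a$ immediately leaves $v$'s component, so no new retreating path within $v$'s component is created, and the invariant persists. If instead $u'$ has not yet been postvisited, then $u'$ is on the current path and hence an ancestor of $v$; combining the tree path $u' \rightsquigarrow v$ with the composite $v \to w \rightsquigarrow u'$ (the second half staying inside $u'$'s component) shows $v$ lies in $u'$'s component, i.e.\ $u = u'$. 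By the inductive hypothesis, $w.\low$ is the correct minimum for retreating paths from $w$ within this shared component whose arcs have all been retreated. The new retreating paths from $v$ contributed by this retreat are precisely those that begin with $a$ (the first arc of a retreating path is retreated last), and these are exactly $a$ prepended to the paths captured by $w.\low$; hence $\min\{v.\low, w.\low\}$ is the updated correct value.

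At a postvisit of $v$, every arc out of $v$ and out of every proper descendant of $v$ has already been retreated on, so every retreating path from $v$ inside $v$'s component has had every one of its arcs retreated. The pre-postvisit invariant therefore gives $v.\low$ as the minimum of $y.\pre$ over all such paths, and combined with Lemma~\ref{L:low-path} this yields $v.\low = v.\pre$ iff $v$ is a strong component leader. If $v$ is not a leader, no $\low$ value is touched and the invariant persists trivially. If $v$ is a leader, setting $w.\low \gets \infty$ for every $w$ in its component establishes the second clause of the invariant for those vertices, while no $\low$ value for a vertex outside the component is affected.

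The main obstacle is the retreat case: carefully aligning ``$w$'s leader has / has not been postvisited'' with ``$v$ and $w$ lie in different / the same component.'' The key ingredient is that the leader of any component stays on the current path until it is postvisited, which lets a short path-composition argument settle the dichotomy and justify both the trivially unchanged $\min$ when $w.\low = \infty$ and the exact identity between the new retreating paths from $v$ and those captured by $w.\low$ in the other case.
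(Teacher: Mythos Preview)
Your proof is correct and reaches the same conclusion as the paper's, but the retreat case is organized differently, and the difference is worth noting.

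The paper splits the retreat on $a=(v,w)$ according to whether $v$ and $w$ lie in the same strong component. In the ``different component'' branch it then does a case analysis on the type of~$a$ (tree, forward, cross; back being impossible) to show in each case that the leader of $w$'s component has already been postvisited, hence $w.\low=\infty$ and the update is vacuous. You instead split on whether the leader $u'$ of $w$'s component has already been postvisited. In the ``not yet postvisited'' branch you observe that $u'$, being previsited but not postvisited, lies on the current path and is therefore an ancestor of the current vertex~$v$; the tree path $u'\rightsquigarrow v$ together with $a$ and a path $w\rightsquigarrow u'$ inside $w$'s component gives a cycle, forcing $v$ and $w$ into the same component. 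This single path-composition argument replaces the paper's three arc-type cases and is arguably cleaner; what the paper's decomposition buys is that it makes explicit exactly how each arc type behaves, which some readers may find more concrete.

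One small imprecision in your postvisit paragraph: the sentence ``every arc out of $v$ and out of every proper descendant of $v$ has already been retreated on, so every retreating path from $v$ inside $v$'s component has had every one of its arcs retreated'' tacitly assumes that all non-terminal vertices on such a path are descendants of $v$, which need not hold (the path can pass through a proper ancestor of $v$). The conclusion is still correct, but the clean justification is the one implicit elsewhere in your argument: the first arc $a_1$ of any retreating path from $v$ exits $v$, so its retreat occurs during the scan of $v$ and hence before $\postvisit(v)$; by the definition of ``retreating,'' every later arc on the path was retreated even earlier.
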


\begin{proof}
We prove the lemma by induction on the number of steps of the exploration.  Only previsits, postvisits, and retreats affect the invariant.  Suppose the invariant holds until~$v$ is previsited.  The previsit correctly initializes $v.\pre$ and correctly sets $v.\low$ to $v.\pre$, preserving the invariant.  Suppose the invariant holds until~$v$ is postvisited.  By Lemma~\ref{L:low-path}, $v$ is the leader of its strong component if and only if $v.\low = v.\pre$ when $v$ is postvisited.  If $v$ is not the leader of a strong component, the postvisit changes no $\low$ values, preserving the invariant.  If $v$ is a leader, the postvisit sets $w.\low \gets \infty$ for every vertex~$w$ in the strong component with leader $v$, also preserving the invariant.

Suppose the invariant holds until the retreat on an arc~$a$ from~$x$ to~$y$.  Suppose~$x$ and~$y$ are in the same strong component, say~$S$.  By Lemma~\ref{L:strong-component-leaders}, the leader of~$S$ is postvisited after~$x$ is postvisited and hence after the retreat on~$a$ occurs, so just before this retreat occurs the values of $x.\low$ and $y.\low$ are respectively the minimum $w.\pre$ such that there is a retreating path in~$S$ from~$x$ or~$y$ on all of whose arcs retreats have already occurred.  The retreat on~$a$ adds to the set of paths that affect the invariant every path consisting of~$a$ followed by a retreating path in~$S$ from~$y$ on all of whose arcs retreats have already occurred.  It follows that the update $x.\low \gets \min\{x.\low, y.\low\}$ correctly preserves the invariant.

Suppose on the other hand that~$x$ and~$y$ are not in the same strong component.  Then retreating on~$a$ produces no new  retreating paths that affect the invariant.  We shall show that $y.\low =\infty$ when the retreat on~$a$ occurs.  This implies that the update $x.\low \gets \min\{x.\low, y.\low\}$ does not change $x.\low$ and hence preserves the invariant.  We do a case analysis based on the type of arc~$a$.  Since~$x$ and~$y$ are in different components, $a$ is not a back arc, because if it were the path of tree arcs from~$y$ to~$x$ followed by arc~$a$ would form a cycle containing~$x$ and~$y$.

Let $u$ be the leader of the strong component containing~$y$.  If~$a$ is a tree arc, $u = y$ by Lemma~\ref{L:strong-component-trees}.  Hence $u$ is postvisited and $y.\low$ is set to~$\infty$ before the retreat on~$a$.

If~$a$ is a forward arc, $u$ is an ancestor of~$y$ and a proper descendant of~$x$ by Lemma~\ref{L:strong-component-trees}.  Since~$y$ is previsited before the advance on~$a$, this advance cannot occur until~$x$ becomes the current vertex after~$y$ is previsited, which cannot occur until all vertices except~$x$ on the path of tree arcs from~$x$ to~$y$, including $u$, have been postvisited.  Thus $y.\low$ is set to~$\infty$ before the advance on~$a$, and hence before the retreat on~$a$.

If~$a$ is a cross arc, $u.\pre \leq y.\pre < x.\pre$ by Lemmas~\ref{L:strong-component-leaders} and~\ref{L:cross-arcs}.  It cannot be the case that $u.\post \geq x.\pre$, for then~$x$ would be a descendant of~$u$ by Lemma~\ref{L:descendants}, and~$x$ and~$y$ would be in the same strong component because of the cycle consisting of the path of tree arcs from~$u$ to~$x$ followed by~$a$ followed by a path from~$y$ to~$u$.  Thus $u.\post < x.\pre$.  In this case also $y.\low$ is set to~$\infty$ before the advance on~$a$, and hence before the retreat on~$a$. 
\end{proof}

What remains is to provide a way to find the vertices in the strong component with leader $v$ when $v$ is postvisited.  For this we use a stack~$F$ of vertices that when postvisited are found to be followers.  When we postvisit a vertex~$v$, we test whether $v$ is a leader (whether $v.\low = v.\pre)$.  If $v$ is not a leader, we push $v$ onto~$F$.  If $v$ is a leader, we test the $\low$ value of the top vertex on~$F$.  If it is at least $v.\low$, we pop this vertex from~$F$ and set its $\low$ value to~$\infty$.  We repeat this until~$F$ is empty or the top vertex on~$F$ has $\low$ value less than $v.\low$.  Finally, we set $v.\low \gets \infty$.  The vertices popped from~$F$, plus $v$, are exactly the vertices in the strong component with leader $v$, as we shall show.  This is the \emph{component-finding computation}.  We can build a representation of each component as part of this computation.

\begin{lemma}\label{L:component-finding-correct}
Suppose that for each vertex~$v$, $v.\low$ has the correct value when $v$ is postvisited.  Then the component-finding computation maintains the invariant that $v.\low = \infty$ if and only if the leader of the component containing $v$ has been postvisited. 
\end{lemma}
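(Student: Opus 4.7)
The plan is induction on the postvisits, since only these can alter $F$ or set any $\low$ value to $\infty$. The base case is vacuous. For the inductive step at the postvisit of $v$, split on whether $v$ is a leader using Lemma~\ref{L:low-path}. If $v.\low < v.\pre$, the operation merely pushes $v$ onto $F$; the leader of $v$'s component is a proper ancestor of $v$ (Lemma~\ref{L:strong-component-leaders}) not yet postvisited, $v.\low \neq \infty$, and no other $\low$ values change, so the biconditional is preserved.

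The substantive case is when $v.\low = v.\pre$, so $v$ is a leader. Here the invariant is preserved precisely when the vertices receiving $\low = \infty$ during this postvisit — namely $v$ together with the popped prefix of $F$ — form exactly the strong component led by $v$. I would establish this via three sub-claims. (a) Every follower $w$ of $v$ lies on $F$ at this moment: by Lemma~\ref{L:strong-component-trees}, $w$ is a descendant of $v$, hence already postvisited (Lemma~\ref{L:descendants}); at its postvisit $w$ was not a leader, so it was pushed onto $F$, and it has not been popped since its leader $v$ is only now being processed. (b) Each such $w$ satisfies $w.\low \geq v.\pre$ (any retreating path within the component ends at a vertex of $\pre \geq v.\pre$, as $v$ is the leader), and it sits above every non-follower still on $F$, since followers are pushed during the interval $(v.\pre, v.\post)$ while non-followers on $F$ were pushed strictly earlier, as shown in (c). (c) Any non-follower $x$ still on $F$ satisfies $x.\low < v.\pre$: first, $x$ cannot be a descendant of $v$ outside $v$'s component, for then its own leader would be a proper descendant of $v$, already postvisited, and the inductive hypothesis would force $x.\low = \infty$, contradicting $x\in F$. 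So $x$ is not a descendant of $v$, and Lemma~\ref{L:descendants} gives $x.\post < v.\pre$. The leader $u$ of $x$'s component is a not-yet-postvisited ancestor of $x$ (Lemma~\ref{L:strong-component-leaders}), hence lies on the current path and must be a proper ancestor of $v$ (it cannot equal $v$, which is a leader of a different component, nor be a descendant of $v$ without contradicting $u.\pre < x.\pre < v.\pre$). A retreating path from $x$ to $u$ within the component exists with all retreats completed by the time $x$ was postvisited, so Lemma~\ref{L:low-values-correct} yields $x.\low \leq u.\pre < v.\pre$.

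Combining (a)–(c), exactly the followers of $v$ are popped, so together with $v$ they constitute its strong component, and setting their $\low$ to $\infty$ preserves the invariant. The main obstacle is sub-claim (c): one must simultaneously invoke the inductive hypothesis (to place the non-follower's leader on the current path), Lemma~\ref{L:strong-component-trees} (to rule out a descendant of $v$ belonging to another already-closed component remaining on $F$), and the correctness of $\low$ from Lemma~\ref{L:low-values-correct}, in order to exclude a non-follower whose $\low$ might coincidentally match or exceed $v.\pre$ and be erroneously popped.
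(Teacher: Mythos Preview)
Your overall architecture matches the paper's proof: induction on postvisits, with the substantive work at a leader's postvisit showing that the popped prefix of $F$ is exactly the set of followers. Sub-claims (a) and (b) are fine and mirror the paper.

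The gap is in sub-claim (c). You assert that for a non-follower $x$ still on $F$, ``a retreating path from $x$ to $u$ within the component exists with all retreats completed by the time $x$ was postvisited,'' concluding $x.\low \le u.\pre$. This is false in general. Take $u$ with tree path $u \to a \to b$, a back arc $b \to a$, and a back arc $a \to u$ scanned \emph{after} the tree arc $a \to b$. Then at $b$'s postvisit the only nontrivial retreating path from $b$ is the single arc $b \to a$; the arc $a \to u$ has not yet been retreated on, and the concatenation $b \to a \to u$ is not retreating (the retreats occur in the wrong order). Hence $b.\low = a.\pre > u.\pre$. If $u$ has another child $d$ that is a singleton leader, then when $d$ is postvisited $b$ sits on $F$ with $b.\low = a.\pre$, directly contradicting your bound $x.\low \le u.\pre$.

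The conclusion $x.\low < v.\pre$ you need is nevertheless true, and the paper's route to it is much shorter: once you have shown $x$ is not a descendant of $v$, you know $x$ was previsited before $v$, so $x.\pre < v.\pre$; and since $x$ was pushed onto $F$ (hence never set to $\infty$) the hypothesis gives $x.\low \le x.\pre < v.\pre$. No appeal to the leader $u$ or to Lemma~\ref{L:low-values-correct} beyond the lemma's hypothesis is needed. Replace your final sentence of (c) with this one-line argument and the proof goes through.
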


\begin{proof}
Suppose the hypothesis of the lemma holds.  The proof of the invariant is by induction on the number of postvisits of leaders.  Suppose the invariant holds until the postvisit of leader $v$.  When $v$ is previsited, every previously postvisited vertex $w$ still on~$F$ has $w.\low < v.\pre$, since $w$ was previsited before $v$ and hence has $w.\low \leq w.\pre < v.\pre$.  Every vertex added to the stack between the previsit and the postvisit of~$v$ is a descendant of~$v$.  Since $v$ is the leader of a strong component, each such vertex $w$ has $w.\low \geq v.\pre$ when $w$ is added to the stack.  These vertices are the followers of~$v$ and of leaders that are proper descendants of~$v$.  By the induction hypothesis, the latter are popped from the stack when their leaders are postvisited.  Those that remain on the stack when $v$ is postvisited are the followers of~$v$ by Lemma~\ref{L:strong-component-trees}, and they are on top of any other vertices on the stack.  The component-finding computation pops exactly these vertices and sets their $\low$ values, and that of~$v$, to~$\infty$, preserving the invariant.      
\end{proof}

Algorithm~T, Tarjan's strong components algorithm, does the $\low$ computation and the component-finding computation during a \emph{single} depth-first exploration.  Neither of the two computations works on its own, nor can they be done one after the other.  This is the subtlety, and the beauty, of the algorithm.\footnote{Even though proving the algorithm correct requires showing that it maintains two invariants at the same time, the algorithm has been formally proved correct using three different formal proof-checking systems~\cite{CCLMT19}.}

\begin{theorem}
Algorithm~T correctly finds the strong components and runs in $\OO(m)$ time.
\end{theorem}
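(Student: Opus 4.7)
The plan is to establish correctness by a joint induction that simultaneously carries the invariants of Lemma~\ref{L:low-values-correct} and Lemma~\ref{L:component-finding-correct}. The earlier lemmas were stated with a circular dependence: the $\low$ computation relies on being able to find the followers of a newly-postvisited leader in order to reset their $\low$ values to~$\infty$, while the component-finding computation (the stack $\followers$) relies on the $\low$ values being correct at postvisit time. To break the circle I would induct on the sequence of events (previsits, postvisits, and retreats on arcs) produced by the single depth-first exploration, maintaining the combined invariant: (a)~for every previsited vertex $v$ whose component leader has not yet been postvisited, $v.\low$ equals the minimum $w.\pre$ over vertices $w$ reachable from~$v$ by a retreating path, inside the component of~$v$, all of whose arcs have already been retreated upon; (b)~for every $v$ whose component leader has been postvisited, $v.\low = \infty$; and (c)~the stack $\followers$ consists of exactly the previsited, postvisited vertices whose component leaders have not yet been postvisited, ordered so that a leader's followers lie contiguously on top when that leader is postvisited.

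The inductive step is then a per-event verification that reuses the casework already done in the two earlier lemmas. A previsit of~$v$ sets $v.\pre$ and $v.\low \gets v.\pre$, preserving (a)--(c). A retreat on an arc from~$x$ to~$y$ updates $x.\low \gets \min\{x.\low, y.\low\}$; if $x,y$ are in the same component the argument inside the proof of Lemma~\ref{L:low-values-correct} shows (a) is preserved, and if they are in different components the same proof (splitting by whether $a$ is a tree, forward, or cross arc) shows that $y.\low = \infty$ at this moment, so the update is harmless. At a postvisit of~$v$, Lemma~\ref{L:low-path} tells us that $v.\low = v.\pre$ iff~$v$ is a leader; if~$v$ is not a leader we push it onto~$\followers$, preserving (c); if~$v$ is a leader, Lemma~\ref{L:strong-component-trees} together with~(c) implies that the vertices currently on top of $\followers$ with $\low \geq v.\pre$ are precisely the followers of~$v$, so popping them (and~$v$) and setting their $\low$ values to~$\infty$ simultaneously restores~(a), (b), and~(c) and allows the algorithm to emit the component as output.

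For the running time, the depth-first exploration visits each vertex twice (previsit and postvisit) and traverses each arc twice (advance and retreat), so the stubs are invoked $\OO(n+m)$ times in total. Each invocation of $\previsit$, $\postvisit$, or $\retreat$ in Algorithm~T does $\OO(1)$ work outside of the stack manipulations of $\followers$. Since each vertex is pushed onto $\followers$ at most once (at its own postvisit) and popped at most once (at the postvisit of its leader), the total cost of stack operations is $\OO(n)$. Summing yields $\OO(n+m) = \OO(m)$ under the assumption $n = \OO(m)$.

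The main obstacle is disentangling the circular dependence between Lemmas~\ref{L:low-values-correct} and~\ref{L:component-finding-correct}; once invariants (a)--(c) are bundled together and all three are carried through the induction, each individual case is essentially the casework already performed, and the running time bound is then a straightforward accounting.
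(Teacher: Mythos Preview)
Your proposal is correct and takes essentially the same approach as the paper: the paper's proof simply asserts that the algorithm maintains the invariants of Lemma~\ref{L:low-values-correct} and Lemma~\ref{L:component-finding-correct} and spends $\OO(1)$ time per vertex and arc. You have made explicit the joint induction that resolves the mutual dependence between those two lemmas, which the paper leaves implicit (it acknowledges in the surrounding text that ``neither of the two computations works on its own'' but does not spell out the combined invariant).
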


\begin{proof}
The algorithm maintains the invariants in Lemma~\ref{L:low-values-correct} and Lemma~\ref{L:component-finding-correct}.  Hence it correctly finds the strong components.  The algorithm spends $\OO(1)$ time per vertex and arc and thus runs in $\OO(m)$ time.
\end{proof}

By Theorem~\ref{T:component-top-order}, Algorithm~T finds the strong components in a reverse topological order.  Also, the algorithm adds vertices to each component in postorder. In applications that need the components in topological order or need the vertices within each component in postorder, Algorithm~T can provide this information.

\subsection{Extensions}

Algorithms for finding strong components, including the three algorithms we present in this paper, can be extended to compute additional information beyond just the partition of the vertices into strong components.  Two such extensions are to find the condensation and to find spanning in-trees and out-trees of the strong components.  Here we describe how to extend Algorithm~T to do these additional computations.  Other algorithms can be extended similarly.

The \emph{condensation} of a directed graph is the graph formed by contracting each strong component to a single vertex and deleting loops and all but one copy of each set of parallel arcs.  This concept appears in Aho et al.~\cite{AGU72}, although they called the condensation the ``equivalent acyclic graph."  The condensation is acyclic, and its vertices can be ordered topologically, so that each arc leads from a smaller to a larger vertex.

We can extend Algorithm~T to form the condensation, at the cost of an extra pass through the arcs.  We use an extra bit per vertex, initially false.  We represent each vertex of the condensation by the leader of the corresponding strong component.  Since Algorithm~T generates the components in reverse topological order, each arc from a vertex in the component being constructed has its tip in this component or in a previously constructed component.  When adding a vertex $v$ to a component with leader $x$, we examine each arc $a$ out of $v$.  If $v.tip$ is in a previously constructed component, we find $y$, the leader of the component containing $v.tip$.  If the bit of $y$ is false, we set it to true and add to the condensation an arc from $x$ to $y$.  Once the component with leader $x$ is constructed, we do a pass through the arcs out of $x$ in the condensation to reset to false the bits of the tips of these arcs.

Knuth~\cite{Knuth93} gives an implementation of a version of Algorithm~T that produces a representation of the strong components, the condensation, and enough information to verify them.

By Lemma~\ref{L:strong-component-trees}, one can obtain a spanning (out-)tree for each strong component by deleting from the depth-first forest each tree arc entering the leader of a component.  (The current implementation does not maintain tree arcs but it can be easily modified to do so.) We can extend the $\low$ computation in Algorithm~T to compute a spanning in-tree\footnote{An \emph{in-tree} is a set of arcs, one out of each vertex except one, the \emph{root} of the in-tree, such that the arcs form no cycles.} of each strong component.  In addition to computing $v.\low$ for each vertex $v$, we maintain for each follower $v$ a \emph{low arc} $v.\lowarc$, which is the arc whose retreat caused the most recent decrease of $v.\low$. Once the exploration is complete, $v.\lowarc$ is the arc whose retreat gave $v.\low$ its final (minimum) value; it is the first arc on the retreating path whose discovery gave  $v.\low$ its final value.

\begin{lemma}
There is no cycle of low arcs.
\end{lemma}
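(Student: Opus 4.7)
The plan is to assign to each vertex a time-stamp that strictly decreases along every low arc, so that a cycle would yield a contradiction. For each follower $v$, let $\tau(v)$ denote the time at which the retreat on $v.\lowarc$ occurs --- equivalently, the moment when $v.\low$ is set to its final minimum value $u.\pre$, where $u$ is the leader of the strong component $S$ containing $v$. For the leader $u$ itself (which has no low arc, and so can appear in a low cycle only as the tip of some arc, never as the tail), let $\tau(u)$ be the time of its previsit.

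I would first observe that any low arc $v.\lowarc = (v, v')$ has both endpoints in the same strong component. The case analysis at the end of the proof of Lemma~\ref{L:low-values-correct} shows that if $v$ and $v'$ lay in different components, then $v'.\low$ would already be $\infty$ when the retreat on $(v, v')$ occurs, so that retreat could not cause a decrease in $v.\low$ and $(v, v')$ could not be a low arc. Hence $v' \in S$, and either $v' = u$ or $v'$ is another follower of $u$.

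The crux is then the inequality $\tau(v') < \tau(v)$ for every low arc $(v, v')$. At time $\tau(v)$ the retreat executes $v.\low \gets v'.\low$ and leaves $v.\low = u.\pre$, which combined with the trivial lower bound $v'.\low \geq u.\pre$ forces $v'.\low = u.\pre$ already at time $\tau(v)$. If $v' = u$, this value has been present in $v'.\low$ since the previsit of $u$, which strictly precedes the retreat. If $v' \neq u$, then the previsit of $v'$ installed $v'.\low = v'.\pre > u.\pre$, so the value $u.\pre$ must have been placed in $v'.\low$ by an earlier retreat, and that retreat is $\tau(v')$ by definition. Once the strict inequality is in hand, any purported cycle $v_1 \to v_2 \to \cdots \to v_k \to v_1$ of low arcs (whose vertices are all necessarily followers) would chain into $\tau(v_1) > \tau(v_2) > \cdots > \tau(v_k) > \tau(v_1)$, which is absurd. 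The only delicate spot is the strictness in the case $v' \neq u$; it rests precisely on the fact that the previsit initializes $v'.\low$ to $v'.\pre > u.\pre$, so reaching $u.\pre$ genuinely requires a retreat strictly before $\tau(v)$.
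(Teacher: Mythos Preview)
Your argument hinges on the assertion that the final value of $v.\low$, for every follower $v$, equals $u.\pre$ where $u$ is the leader of the component containing $v$. This is false. By Lemma~\ref{L:low-values-correct}, $v.\low$ is the minimum $w.\pre$ over vertices $w$ reachable from $v$ by a \emph{retreating} path within the component, and a follower need not be able to reach the leader by such a path; Lemma~\ref{L:low-path} only guarantees that it reaches \emph{some} vertex of smaller preorder number.

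Concretely, take a component $\{u, x, y\}$ with tree arcs $u \to x \to y$, back arcs $y \to x$ and $x \to u$, and with the outgoing arcs of $x$ ordered so that $(x, y)$ is scanned before $(x, u)$. Then $u.\pre = 1$, $x.\pre = 2$, $y.\pre = 3$. The retreat on $(y, x)$ occurs while $x.\low$ is still $2$, so the final value is $y.\low = 2 \neq u.\pre$, and $y.\lowarc = (y, x)$. The retreat on $(x, u)$ occurs only afterwards, so $\tau(x) > \tau(y)$. Your key inequality $\tau(v') < \tau(v)$ therefore fails on the low arc $(y, x)$: here the tail is $y$, the tip is $x$, and $\tau(x) > \tau(y)$. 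Once this inequality fails, the entire timestamp chain collapses; the step ``leaves $v.\low = u.\pre$, which \ldots\ forces $v'.\low = u.\pre$ already at time $\tau(v)$'' is precisely where the argument breaks.

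The paper's proof avoids timestamps altogether. It selects the vertex $v$ of minimum preorder number on a hypothetical low-arc cycle and shows that the low-arc path starting at $v$ consists of tree arcs leading down to some descendant $w$, after which $w$'s low arc is a cross or back arc to a non-descendant of $v$. That non-descendant has preorder number smaller than $v.\pre$ and hence cannot lie on the cycle, yielding the contradiction.
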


\begin{proof}
Suppose the lemma is false; that is, there is a cycle of low arcs.  Every vertex on the cycle is a follower, since no leader has a low arc, and all vertices on the cycle are in the same strong component.  Let $v$ be the vertex of minimum $v.\pre$ on the cycle.  Let $\low$ values be their final (minimum) values.  Since $v$ is a follower, $v.\low < v.\pre$.  No descendant~$x$ of~$v$ can have $x.\low < v.\low$.  Let $w$ be the descendant of~$v$ whose $\low$ value first equals $v.\low$.  Then the low arc of~$w$ must be a cross or back arc entering a non-descendant of~$v$.  An induction on the number of tree arcs on the tree path from~$v$ to~$w$ proves that each of these tree arcs is the low arc of the vertex it exits.  Hence the path of low arcs from~$v$ leads to a non-descendant of~$v$, a contradiction.  
\end{proof}

\begin{corollary}
The set of low arcs defines a forest of in-trees rooted at the leaders of the strong components.
\end{corollary}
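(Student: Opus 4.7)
The plan is to verify the three defining properties of an in-tree from the footnote, component by component: in each strong component $S$ with leader $u$, (a) every follower has exactly one outgoing low arc, (b) the leader $u$ has no outgoing low arc, (c) the low arcs whose tails lie in $S$ contain no cycle, and (d) every such low arc has its head also in $S$. Property (c) is immediate from the preceding lemma. Given (a)--(d), the low arcs incident to $S$ form a set of $|S|-1$ arcs, one out of each non-leader, lying entirely inside $S$ with no cycle, which is exactly an in-tree rooted at $u$; taking the union over all components gives the claimed forest.

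For (a) and (b), I would argue using Lemma~\ref{L:low-values-correct}. If $u$ is a leader, then $u.\low = u.\pre$ at the moment $u$ is postvisited, so no retreat on an arc exiting $u$ ever decreased $u.\low$, and hence no low arc is ever assigned to $u$. Conversely, for any follower $v$ of $u$, the final value $v.\low$ satisfies $v.\low = u.\pre < v.\pre$, so at least one retreat on an arc exiting $v$ strictly decreased $v.\low$; the most recent such retreat defines $v.\lowarc$, so it exists and is unique.

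For (d), I would revisit the case analysis at the end of the proof of Lemma~\ref{L:low-values-correct}. Suppose $v.\lowarc = a$ is an arc from $v$ to $w$, and suppose for contradiction that $v$ and $w$ lie in different strong components. Then $a$ is a tree, forward, or cross arc (not a back arc, since back arcs close cycles), and the case analysis in that lemma shows that in each situation $w.\low$ has already been set to $\infty$ before the retreat on $a$ occurs. But then the update $v.\low \gets \min\{v.\low, w.\low\}$ does not change $v.\low$, contradicting the assumption that this retreat produced the most recent decrease of $v.\low$. Hence $w$ is in the same strong component as $v$.

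The main obstacle, and what I expect to need the most care, is part (d): spelling out why $w.\low = \infty$ at the moment the retreat on $a$ occurs in each of the three non-back cases. The cleanest route is simply to cite the corresponding case analysis already carried out inside the proof of Lemma~\ref{L:low-values-correct}, rather than redoing it, since the bookkeeping about postvisit orders of leaders across components is exactly what was established there.
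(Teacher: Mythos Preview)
Your argument is correct and fleshes out what the paper leaves implicit: the paper states this as a bare corollary with no proof, relying on the preceding no-cycle lemma together with the obvious facts that exactly the followers acquire a low arc. Your parts (a)--(c) match this implicit reasoning.

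Your part (d), showing that a low arc never crosses a component boundary, is an extra step the paper does not mention. It is needed if one wants each in-tree to span exactly one strong component (the intended use, since these are meant to be ``spanning in-trees of the components''), and your reduction to the case analysis inside Lemma~\ref{L:low-values-correct} is the right way to get it: if $v$ and $w$ were in different components then $w.\low=\infty$ at the moment of retreat on~$a$, so that retreat could not have decreased $v.\low$. Strictly speaking, the bare statement ``forest of in-trees rooted at the leaders'' already follows from (a)--(c) alone via the counting argument ($n$ vertices, $n-k$ arcs, no cycles $\Rightarrow$ $k$ trees, whose roots are exactly the vertices with no outgoing low arc, namely the leaders), so (d) is what upgrades the conclusion from ``the roots are the leaders'' to ``each tree is a single component.''
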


In-trees and out-trees can be used to verify that the strong components computed by Algorithm~T are indeed strongly connected.  The topological order of the components defined by the algorithm can be used to verify that no vertices in different components are mutually reachable.  Verifying both demonstrates that the strong components produced by the algorithm are correct.  The computations of the condensation and of the in-trees and out-trees can also be used to reduce the number of arcs needed to maintain reachability: Delete all arcs but those in an in-tree or an out-tree and all but one arc between each pair of strong components.  There may still be redundant arcs, i.e., arcs whose removal does not change reachability.  Finding a reachability-preserving subgraph with fewest arcs in a strongly connected graph is NP-complete.  Gibbons et al.~\cite{GKRST91} give two algorithms that run in $\OO(m+n\log n)$ time to find a reachability-preserving subgraph with no redundant arcs in a strongly connected graph.  Simon~\cite{Simon89} claims a linear-time algorithm, but his algorithm is not correct.  References \cite{AGU72,KRY95,Vetta01,ZNI03,BDK09} deal with the closely related notion of \emph{transitive reduction} and with approximation algorithms for the problem of finding a reachability-preserving subgraph with fewest arcs. 

\section{Implementation of Algorithm~T}\label{S:Tscc-implement}

Now we implement Algorithm~T.  We represent the strong components using a vertex pointer field $v.\ptr$ set equal to the leader of the strong component containing $v$. Thus two vertices~$u$ and~$v$ are in the same strong component if and only if $u.\ptr=v.\ptr$ when the algorithm finishes. We use the following four operations to maintain a stack~$S$:   $\stack()$, which returns an empty stack; $S.\push(v)$, which pushes $v$ onto stack~$S$; $S.\Top()$, which returns the top vertex on stack~$S$ if $S$ is non-empty, or returns $\NULL$ if $S$ is empty; and $S.\pop()$, which deletes and returns the top vertex on stack~$S$ if $S$ is non-empty, or throws an error if $S$ is empty.  We also allow the possibility of using $\pop$ without a return value, to merely delete the top element on the stack.  To avoid the need to test whether the stack~$F$ of followers is empty, we use a dummy \emph{guard} vertex $\guard$ with $\guard.\low = 0$.  We initialize~$F$ to contain just the guard, which is never popped.

We include one space-saving optimization.  There is no need to maintain $\pre$.  If $v$ is a vertex, $v.\low = v.\pre$ just after $v$ is previsited.  If $v.\low$ ever decreases, $v$ is not a leader.  Instead of maintaining preorder numbers, we maintain a bit $v.\lead$ for each vertex $v$, set to $\true$ when $v$ is previsited and $\false$ when $v.\low$ decreases.

The implementation uses the recursive implementation of depth-first exploration in Section~\ref{sub-recursive-dfs}.  It instantiates $\start$, $\unvisited$, $\previsit$, $\postvisit$, and $\retreat$ with appropriate computations and omits the other stubs.  It includes one other stub, $\leader(v)$, used in $\previsit$ to test whether $v$ is a leader.  We add this stub to avoid the need to store the leader bits separately, as we shall discuss.  Figure~\ref{F-Tscc} gives the instantiations of the stubs.

\begin{figure}
\centering
\parbox{2.2in}{
\begin{algorithm}[H]
\Fn{$\start(V)$}
{
    $\Time \gets 0$ \;
    $\guard.\low \gets 0$ \;
    $\F \gets \stack()$ \;
    $F.\push(\guard)$ \;
    \For{$v\in V$}
        {$v.\low \gets 0$}
}
\end{algorithm}
\vspace*{3pt}
\begin{algorithm}[H]
\Fn{$\unvisited(v)$}
{
    \Return $v.\low=0$ \;
}
\end{algorithm}
\vspace*{3pt}
\begin{algorithm}[H]
\Fn{$\leader(v)$}
{
    \Return $v.\lead$ \;
}
\end{algorithm}
}
\hspace*{-0.5in}
\parbox{2in}{
\begin{algorithm}[H]
\Fn{$\previsit(v)$}
{
    $\Time \gets \Time +1$ \;
    $v.\low \gets \Time$ \;
    $v.\lead \gets \true$ \;
}
\end{algorithm}
\vspace*{10pt}
\begin{algorithm}[H]
\Fn{$\retreat(v,a, w)$}
{
    \If{$w.\low < v.\low$}
    {
        $v.\low \gets w.\low$ \;
        $v.\lead \gets \false$ \;
    }
}
\end{algorithm}
}
\hspace*{-0.2in}
\parbox{2.8in}{
\begin{algorithm}[H]
\Fn{$\postvisit(v)$}
{
    \eIf{$leader(v)$}
    {
        \While{$F.\Top().\low \geq v.\low$}
        {
            $x \gets \F.\pop()$ \;
            $x.\ptr\gets v$ \;
            $x.\low \gets \infty$ \;
        }
        $v.\ptr\gets v$ \;
        $v.\low\gets \infty$ \;
    }
    {
        $F.\push(v)$ \;
    }
}
\end{algorithm}
}
\caption{Algorithm~T as an instantiation of the generic depth-first exploration algorithm.}\label{F-Tscc}
\end{figure}

It remains to implement the stack operations.  We represent the stack~$F$ by an endogenous singly linked list of its vertices.  Since a vertex is not on~$F$ when it is added to a strong component, the $ptr$ fields can do double duty: While~$v$ is on~$F$, $v.\ptr$ is the vertex below $v$ on~$F$; once $v$ is in a strong component, $v.\ptr$ is the leader of the strong component containing $v$.  Figure~\ref{F-stack} gives implementations of the stack operations.  In all our uses of $\pop$, the stack is guaranteed to be non-empty.  Therefore we omit the (otherwise required) test that the stack is non-empty.

\begin{figure}[t]
\parbox{1.6in}{
\begin{algorithm}[H]
\Fn{$\stack()$}
{
    \Return $\NULL$ \;
}
\end{algorithm}
}
\parbox{1.6in}{
\begin{algorithm}[H]
\Fn{$S.\push(v)$}
{
    $v.\ptr \gets S$ \;
    $S \gets v$ \;
}
\end{algorithm}
}
\parbox{1.6in}{
\begin{algorithm}[H]
\Fn{$S.\pop()$}
{
    $\Top \gets S$ \;
    $S \gets S.\ptr$ \;
    \Return $\Top$ \;
}
\end{algorithm}
}
\parbox{1.6in}{
\begin{algorithm}[H]
\Fn{$S.\Top()$}
{
    \Return $S$ \;
}
\end{algorithm}
}
\caption{Implementation of a stack as an endogenous list singly linked using the $\ptr$ field.}
\label{F-stack}
\end{figure}

\begin{figure}[t]
\centering
\parbox{2.1in}{
\begin{algorithm}[H]
\Fn{$\leader(v)$}
{
    \Return $v.\low \;\&\; 1=0$ \;
}
\end{algorithm}
}
\parbox{2.1in}{
\begin{algorithm}[H]
\Fn{$\previsit(v)$}
{
    $\Time \gets \Time +2$ \;
    $v.\low \gets \Time$ \;
}
\end{algorithm}
}
\parbox{2.2in}{
\begin{algorithm}[H]
\Fn{$\retreat(v,a,w)$}
{
    \If{$w.\low < v.\low$}{$v.\low \gets w.\low \;|\; 1$ \;}
}
\end{algorithm}
}
\caption{Alternative stubs that encode leader bits in \low\ values.}\label{F-Tscc1}
\end{figure}

If we are willing to indulge in a small programming trick, we need not store the leader bits separately but can encode them in the $\low$ values.  We store $\neg v.\lead$ in the low-order bit of $v.\low$, which we do by multiplying $v.\low$ by $2$ and adding $1$ if $v.\lead = \false$.  The first author and John Hopcroft used this encoding trick fifty years ago in their planarity-testing and triconnected components algorithms~\cite{HoTa73,HoTa74}.  The changes needed are in $\leader$, $\previsit$, and $\retreat$.  Figure~\ref{F-Tscc1} gives the modified versions.  In $\leader$, the operation ``$\&$" is bitwise ``and": If $b$ is an integer, ``$b \,\&\, 1$" returns the low-order bit of~$b$.  In $\retreat$, the operation ``$|$" is bitwise ``or": If $b$ is an integer, ``$ b\, |\, 1$" returns $b$ with the low-order bit set to~$1$.  

If the number of vertices in the input graph is a known quantity (which would \emph{not} for example be the case if the graph is given implicitly, say by a function that, given a vertex, computes the arcs out of the vertex), Algorithm~T can sometimes stop early, specifically when all of the vertices not yet found to be in a strong component are guaranteed to be in the same component.  This is true when all vertices have been previsited (as indicated by a count of previsited vertices) and the $\low$ value of the current vertex becomes equal to the $\low$ value of the start vertex of the current search.  When this happens, the algorithm can form a strong component containing all the vertices on the followers stack and all those on the current path, and then stop.  This is an example of ``stop when done," proposed by Kelly \cite{Kelly20a} as a way to speed up breadth-first search and similar computations.  Knuth (private communication, 2021) claims that stopping when done in Algorithm~T reduces the running time significantly in practice, especially on dense graphs.

The implementation we have presented is almost the same as Pearce's~\cite{Pearce16} version of Algorithm~T.  He maintains leader bits separately rather than encoding them in $\low$ values, and he uses large $\low$ values to represent the components, rather than using the stack pointers.

The implementation differs from the original 1972 version of Algorithm~T in several ways.  One is that all retreats update $\low$ values in the same way.  In the 1972 version, a retreat on a non-tree arc from~$v$ to~$w$ replaces $v.\low$ by $\min\{v.\low, w.\pre\}$ rather than~$\min\{v.\low, w.\low\}$.  The correctness proof of the 1972 version uses Lemma~\ref{L:follower-path} in place of lemma~\ref{L:low-path}.  Handling tree and non-tree arcs in the same way simplifies the code and eliminates the need to maintain preorder numbers.  Eve and Kurki-Suonio~\cite{EvKu77} and Duff and Reid~\cite{DuRe78} proposed this change independently.

Another difference is that the 1972 version adds \emph{all} the vertices to the stack, not just the followers, and it adds them when they are previsited, not when they are postvisited.  Nuutila and Soisalon-Soininen \cite{NuSo94} proposed stacking only the followers, and only when they are postvisited.  Their motivation was to avoid unnecessary additions to the stack. A more important effect of this change is that it allows us to implement the algorithm without recursion and without using additional space, by storing the followers stack and the recursion stack using the same space.  We describe how to do this in Section~\ref{S-non-recursive-dfs}.

A third difference is that the 1972 version does not set $\low$ values of vertices popped from the stack to~$\infty$.  Instead, when retreating on an arc~$a$ from~$v$ to~$w$, it sets $v.\low \gets \min\{v.\low, w.\low\}$ only if $w$ is on the stack.  Given that the 1972 algorithm adds vertices to the stack in preorder, this condition is equivalent to~$v$ and~$w$ being in the same strong component, as one can prove in the same way as the proof of Lemma~\ref{L:low-values-correct}.  Both Eve and Kurki-Suonio~\cite{EvKu77} and Duff and Reid~\cite{DuRe78} independently used the method of setting $v.\low$ to a high value when~$v$ is popped from the stack.

If the vertices are the integers from $1$ to $n$ (or from $0$ to $n-1$) and we know the value of $n$, we can use the $\low$ values instead of the vertex pointers to encode the components, as suggested by Knuth~\cite{Knuth21}: When a vertex $x$ is added to the component with leader $v$, we set $x.\low \gets v+n$.  This maintains the invariant that the $\low$ values of vertices already in components are larger than the $\low$ values of vertices not in components, and it allows us to compute the leader of the component containing a vertex $x$ by subtracting $n$ from $x$.  This frees $x.\ptr$ for another use once the component containing $x$ is found, and it saves one assignment per vertex.  This idea also works in Algorithm~C (Section~\ref{S-other-C}).

\section{Two Alternative Algorithms}\label{S-other}

Two other linear-time algorithms to find strong components are now known.  Both use depth-first search.  The first finds cycles and uses them to form larger and larger strongly connected vertex sets; the second does \emph{two} explorations, one forward, one backward.

\subsection{Strong components via cycle-finding}\label{S-other-C}

Algorithm~T is very efficient but not very intuitive.  There is a more intuitive algorithm that can be made efficient using some of the ideas we have developed, however.  We call a set of vertices \emph{strongly connected} if any pair of its vertices are mutually reachable by paths of vertices \emph{within} the set.  By Lemma~\ref{L:strong-component-paths} the strong components are strongly connected sets, and they are the maximal ones.  But in general they are not the only ones. The set of vertices on any cycle is a strongly connected set.  So is the union of any two non-disjoint strongly connected sets.  This suggests the following algorithm to find strong components: Start with each vertex in a singleton set, which is strongly connected by definition.  Find a cycle containing vertices in two or more of the existing strongly connected sets.  Unite the sets containing the vertices on this cycle.  Repeat until no cycle contains vertices in two or more strongly connected sets.  The final strongly connected sets are the strong components, because if two vertices not in the same set are mutually reachable, there is a cycle containing them, and the algorithm has not stopped.

An alternative view of this algorithm is as a process of cycle contraction. Find a cycle.  Contract all its vertices into a single vertex.  Repeat until every cycle, if any, is a loop.  (Contractions can produce loops, as well as parallel arcs, even if the original graph does not contain any.)  Each vertex~$v$ in the final contracted graph corresponds to a strong component in the original graph, whose vertices are those contracted to form~$v$.  We adopt the set union view  of the algorithm rather than the contraction view because it is more general: It allows us to handle the contractions implicitly.

We make the algorithm efficient by using a depth-first exploration to find cycles systematically.  We maintain a stack of strongly connected sets, initially empty.  When previsiting a vertex $v$, we initialize a strongly connected set $\{v\}$ and push it on the stack.  When traversing a non-tree arc~$a$ from~$v$ to~$w$, if $w$ is in a set on the stack, we pop all sets on the stack down to and including the set containing $w$, unite them, and push on the stack the set resulting from the unions.  When postvisiting a vertex $v$, the set~$S$ containing $v$ is on top of the stack.  If $v$ is the vertex in~$S$ with $v.\pre$ minimum, we pop~$S$ from the stack and declare it to be a strong component.

This is the \emph{cycle-finding} algorithm for strong components, which we call \emph{Algorithm~C}.  We shall prove the algorithm correct, but first we look at its early history, develop a linear-time implementation, and look at its later history.   

Sargent and Westerberg~\cite{SaWe64} proposed the contraction version of Algorithm~C (in 1964!), but they gave no efficiency analysis.  Instead of a forward exploration, they used a backward exploration, which produces the strong components in a topological order rather than a reverse topological order.  Purdom~\cite{Purdom70} later independently proposed the contraction version of Algorithm~C.  His implementation maintains the contracted vertices using a na\"{i}ve method, which results in an $\OO(n^2)$ time  bound.  Munro~\cite{Munro71b} improved the bound to $\OO(m+n\log n)$ using a ``relabel the smaller half" idea to maintain the contracted vertices.

None of these early versions of the algorithm runs in linear time.  To obtain a linear-time implementation, we consider the behavior of the algorithm in more detail.  If we view the algorithm as a process of set unions rather than cycle contractions, we see that the algorithm reduces the problem of finding strong components to a classical problem in data structures, the \emph{disjoint set union} or \emph{union-find} problem: Maintain a collection of disjoint sets subject to four operations: 

$\makeset(x)$: Create a singleton set $\{x\}$.  Element~$x$ must be in no existing set.

$\find(x)$: Return the set containing element~$x$.

$\unite(x, y)$: Given that~$x$ and~$y$ are in different sets, unite these sets.  This operation destroys the old sets containing~$x$ and~$y$.

$\set(x)$: Return a list of the elements in the set containing element~$x$.

Algorithm~C takes $\OO(m)$ time plus the time for $n$ $\makeset$ operations, $\OO(m)$ $\find$ and $\unite$ operations, and one $\set$ operation for each of the strong components.  One efficient way to do the $\set$ operations is to store each set as an endogenous circular singly linked list of its elements.  Catenating two such lists takes $\OO(1)$ time, totaling $\OO(n)$ over all the unions, and each $\set$ operation takes $\OO(1)$ time to return a pointer to the list of elements in the set.

The problem of implementing $\makeset$, $\find$, and $\unite$ operations has a fascinating history and results~\cite{Tarjan75}.  Several versions of a simple but hard to analyze data structure run in $\OO(m\alpha_{m/n}(n))$ time, where $\alpha_k(n)$ is a functional inverse of Ackermann's function (see, e.g., \cite{ATGRZ14}). The~$\alpha$ function grows extremely slowly and for all practical purposes is constant.     

The $\OO(m\alpha_{m/n}(n))$ bound is tight for general instances of the disjoint set union problem~\cite{FrSa89}, but those that occur in the cycle-finding algorithm are \emph{not} completely general.  They have the \emph{incremental-tree} property: Each $\unite$ unites a vertex with its parent in the depth-first forest.  Such instances can be solved in $\OO(m)$ time by using a more complicated data structure~\cite{GaTa85}.

Thus we can implement the cycle-finding algorithm to run in almost-linear or linear time by using an appropriate solution to the disjoint set union problem.  But the set union instances that occur in the algorithm are even more restricted: The unite operations obey a stack discipline, and any $\find$ of a set that is on the stack but not on top results in one or more corresponding $\unite$ operations.  By exploiting this and using our previous results on depth-first search, we can implement the cycle-finding algorithm without the need for any additional data structures.  The result is a strong components algorithm closely resembling Algorithm~T.

We number the vertices consecutively from $1$ in preorder, as in Section~\ref{S-dfs}.  We define the \emph{leader} of a strongly connected set to be the vertex $v$ in it with $v.\pre$ minimum.  Instead of maintaining a stack of sets, we maintain a stack $\LL$ of their leaders.  We run Algorithm~C as follows: When previsiting a vertex~$v$, set $v.\pre$ equal to the next available natural number, make $\{v\}$ a set with leader $v$, and push $v$ on $\LL$.  While $\LL.\Top().\pre > a.\tip.\pre$,
pop the top vertex from~$\LL$ and unite its set with that of the new top vertex on~$\LL$, which becomes the leader of the new set. 
Repeat until $\LL.\Top().\pre \leq a.\tip.\pre$.  As we shall show, if~$S$ is the set containing $\LL.\Top()$ after all the unites caused by the traversal of~$a$, $S$ is a strongly connected set.  When postvisiting a vertex~$v$, if $v$ is on top of~$\LL$, pop $v$, set $w.\pre$ to~$\infty$ for every vertex $w$ in the set with leader $v$, and declare this set to be a strong component.

This is a \emph{simple} implementation of Algorithm~C, and it runs in linear time.

\begin{theorem}
Algorithm~C is correct.
\end{theorem}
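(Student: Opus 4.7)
The plan is to prove correctness by induction on the steps of the depth-first exploration, maintaining three invariants about the stack $L=(\ell_1,\ldots,\ell_k)$ of leaders (listed from bottom to top) and their associated sets $S_1,\ldots,S_k$: \textbf{(I1)} each $S_i$ is strongly connected in $G$; \textbf{(I2)} every $\ell_i$ lies on the current path and $\ell_1.\pre<\cdots<\ell_k.\pre$; and \textbf{(I3)} every previsited vertex $x$ not yet placed in a completed component belongs to $S_j$, where $j$ is the largest index with $\ell_j.\pre\le x.\pre$ (the convention $x.\pre=\infty$ for completed vertices keeps them from interfering).

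Previsit of a new $v$, the tree-advance on which it sits, and both postvisit cases preserve these invariants routinely.  The interesting event is the traversal of a non-tree arc $v\to w$ with $w$ not in a completed component.  By (I3), $w$ lies in some $S_i$ with $\ell_i.\pre\le w.\pre<\ell_{i+1}.\pre$, and the while-loop pops exactly $S_k,\ldots,S_{i+1}$ and unites them into $S_i$.  Invariants (I2) and (I3) follow immediately from the order of pops.  For (I1), (I2) gives a tree path from $\ell_i$ down to the current vertex $v$ that passes through $\ell_{i+1},\ldots,\ell_k$ in order; closing this path with the arc $v\to w$ and an internal $w$-to-$\ell_i$ path in the inductively strongly connected $S_i$ produces a cycle on which $\ell_i,\ell_{i+1},\ldots,\ell_k,v,w$ are all mutually reachable, and combining with the strong connectivity of each $S_j$ shows the merged set is strongly connected.

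Correctness at a postvisit then follows cleanly.  Suppose $v$ is postvisited and is on top of $L$.  I would first show $v$ must be the leader of its strong component.  If not, Lemma~\ref{L:follower-path} supplies a path of tree arcs from $v$ to some descendant $x$ followed by a non-tree arc $x\to y$ with $y$ in the same strong component as $v$ and $y.\pre<v.\pre$.  When that arc was traversed, $\mathrm{SC}(v)$ was not yet completed (it is finalized only at the postvisit of its leader, which comes no earlier than $v$'s), so $y.\pre$ still held its original finite value; since $v$ had never been popped (otherwise it would not be on top now), $v$ lay on $L$ with $v.\pre>y.\pre$, and the while-loop would have popped $v$, a contradiction.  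Hence $v$ is a strong component leader.  By (I1), the top set $S$ is strongly connected and contains $v$, so $S\subseteq\mathrm{SC}(v)$.  Conversely, any $z\in\mathrm{SC}(v)\setminus\{v\}$ is a proper descendant of $v$ by Lemma~\ref{L:strong-component-leaders}, so $z.\pre>v.\pre$; and $z$ cannot already sit in a completed component, for its leader would then be a strong component leader distinct from $v$, contradicting the uniqueness of the leader of $\mathrm{SC}(v)$.  Therefore (I3) puts $z$ in $S_k=S$, giving $S=\mathrm{SC}(v)$.  Since every vertex is eventually postvisited and the invariants ensure each vertex winds up in exactly one emitted component, the algorithm correctly finds all strong components.

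The main obstacle will be the careful maintenance of (I3), in particular verifying that when the while-loop halts after popping, $w$ really ends up in the new top set $S_i$, so that the algorithm's simple pop-until-$\pre$-condition correctly implements the abstract operation ``unite every set containing a vertex on the cycle just traversed.''  Once (I3) is firm, the cycle construction for (I1) is short, and the appeal to Lemma~\ref{L:follower-path} at postvisit goes through directly.
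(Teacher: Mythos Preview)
Your proof is correct and follows essentially the same inductive strategy as the paper: your invariants (I1), (I2), (I3) correspond to the paper's invariants (vi), (ii), and (v), and your cycle argument for preserving (I1) during a non-tree traversal is the same as the paper's verification of (vi).

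The one substantive difference is in how you establish that the popped set $S$ at a postvisit equals a full strong component (the paper's invariant (vii)). The paper argues \emph{closure}: every arc leaving a vertex of $S$ enters either $S$ or a previously declared component, so $S$ is maximal among strongly connected sets. You instead first invoke Lemma~\ref{L:follower-path} to show that $v$ is the leader of $\mathrm{SC}(v)$, then sandwich $S$ between $\mathrm{SC}(v)$ (via (I1)) and $\mathrm{SC}(v)$ (via (I3)). Both routes are short; yours makes a pleasant reuse of the machinery built for Algorithm~T, while the paper's is self-contained. One point to make explicit in your write-up: the sentence ``$\mathrm{SC}(v)$ was not yet completed (it is finalized only at the postvisit of its leader)'' tacitly appeals to the correctness of all \emph{earlier} declarations, so the statement ``each declared set is a strong component'' must itself be carried as part of the induction (as the paper does with its invariant (vii)), not derived afterward from (I1)--(I3) alone. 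With that bookkeeping in place, your argument that $y.\pre$ is still finite when $x\to y$ is traversed---and hence that the while-loop would have popped $v$---goes through.
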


\begin{proof}
A straightforward induction verifies that the algorithm maintains the following invariants: 
\begin{enumerate}
\item[(i)] The vertices in sets are exactly the previsited vertices. \item[(ii)] The leaders on $\LL$ are a subset of the vertices on the current path, in the same order on $\LL$ as on the current path.  (The topmost leader on $\LL$ is the last leader on the current path.) \item[(iii)] The leader of a set is its vertex $v$ with $v.\pre$ minimum. \item[(iv)] Every vertex on the current path is in a set whose leader is on $\LL$. \item[(v)] If $v$ is on $\LL$, and $w$ is just above $v$ on $\LL$, the vertices in the set with leader $v$ are exactly the vertices~$x$ with $v.\pre \leq x.\pre < w.\pre$; if $v$ is on top of~$\LL$, the vertices in its set are exactly the vertices~$x$ with $v.\pre \leq x.\pre < \infty$. \item[(vi)] Except possibly in the middle of the traversal of a non-tree arc, the sets are strongly connected. \item[(vii)] If $v$ is on top of~$\LL$ when $v$ is postvisited, its set is a strong component.
\end{enumerate}

The only invariants whose verification requires a bit of argument are (vi) and (vii).  To verify (vi), suppose that the traversal of a non-tree arc~$a$ from~$x$ to~$y$ unites two or more sets whose leaders are on $\LL$.  Let $v$ be the current vertex, and let $u$ be minimum of the leaders of the sets united by the traversal.  Assuming all the invariants hold before the traversal, by (v) $y$ is in the set whose leader is $u$.  There is a cycle consisting of the arc~$a$ followed by the path of tree arcs from $u$ to~$v$.  This cycle contains the leaders of all the sets united by the traversal, by (ii).  Hence uniting all the sets preserves (vi).

To verify (vii), suppose the invariants hold just before the postvisit of a vertex $v$ that is on top of~$\LL$.  Let~$a$ be an arc from~$x$ to~$y$ with~$x$ in the set~$S$ with leader $v$ when $v$ is postvisited.  We claim that either~$y$ is in~$S$ or~$y$ is in a previously declared strong component.  To prove the claim, consider such an arc~$a$.  If $y.\pre = \infty$ just before $v$ is prevsited, the claim holds.  Suppose $y.\pre$ is finite just before $v$ is postvisited.  If $y.\pre \geq v.\pre$, then~$y$ is in~$S$ by (v).  If $y.\pre < v.\pre$, $a$ is a non-tree arc and $y.\pre$ was finite when~$a$ was traversed.  After this traversal,~$x$ and~$y$ are in the same set.  Hence the claim holds.
\end{proof}

The first, more general implementation of Algorithm~C is also correct, because its behavior is isomorphic to that of the implementation with set leaders.

We obtain an implementation of Algorithm~C from the generic implementation of depth-first exploration in Section~\ref{sub-recursive-dfs} by instantiating stubs $\start$, $\unvisited$, $\previsit$, $\postvisit$, and $\nontreetraverse$, and omitting the other stubs.  Figure~\ref{F-ccscc} gives the instantiations.  As we mentioned previously, one can efficiently maintain each strongly connected set explicitly, by storing its vertices in a singly linked circular list.  This takes an extra pointer per vertex, however.  Instead, we use the same idea as in Algorithm~T: Push vertices found to be followers onto a stack $\F$.  (Each such vertex moves from~$L$ to~$F$.)  When a postorder visit pops the leader of a strong component from $\LL$, pop its followers from $\F$ to form the component.  As in Algorithm~T, we represent both stacks as endogenous singly linked lists using $\ptr$ fields to store the pointers.  Figure~\ref{F-stack} gives an implementation of the stack operations.  We use a guard vertex $\guard$ on the bottom of~$F$ to avoid testing for an empty stack.  Stack~$L$ does not need a guard.  Also as in Algorithm~T, we use the $\ptr$ fields to represent the strong components, by setting $v.\ptr$ equal to the leader of the component containing $v$ when the strong component containing $v$ is declared.  An alternative as in Algorithm~T is to represent the strong components by singly linked lists of their vertices using the $\ptr$ fields to store the pointers.  

\begin{figure}[t]
\centering
\hspace*{-0.2in}
\parbox{1.8in}{
\begin{algorithm}[H]
\Fn{$\start(V)$}
{
    $\Time \gets 0$ \;
    $\LL \gets \stack()$ \;
    $\F\gets \stack()$ \;
    $\guard.\pre \gets 0$ \;
    $F.\push(guard)$ \;
    \SPC{0.5}
    \For{$v\in V$}
    {
        $v.\pre \gets 0$ \;
    }
}
\end{algorithm}
\vspace*{5pt}
\begin{algorithm}[H]
\Fn{$\unvisited(v)$}
{
    \Return $v.\pre=0$ \;
}
\end{algorithm}
}
\hspace*{-0.3in}
\parbox{2.55in}{
\begin{algorithm}[H]
\Fn{$\previsit(v)$}
{
    $\Time \gets \Time +1$ \;
    $v.\pre \gets \Time$ \;
    $\LL.\push(v)$ \;
}
\end{algorithm}
\vspace*{5pt}
\begin{algorithm}[H]
\Fn{$\nontreetraverse(v,a,w)$}
{
    \While{$w.\pre < \LL.\Top().\pre$}
    {
        $F.\push(\LL.pop())$ \;
    }
}
\end{algorithm}
}
\hspace*{-0.35in}
\parbox{2.75in}{
\begin{algorithm}[H]
\Fn{$\postvisit(v)$}
{
    \If{$v = \LL.\Top()$}
    {
        \While{$ v.\pre < F.\Top().\pre$}
        {
            $x \gets F.\pop()$ \;
            $x.\ptr \gets v$ \;
            $x.\pre \gets \infty$ \;
        }
        \SPC{0.5}
        $\LL.\pop()$ \;
        $v.\ptr \gets v$ \;
        $v.\pre \gets \infty$ \;
    }
}
\end{algorithm}
}
\caption{Algorithm~C as an instantiation of the generic depth-first search algorithm.}\label{F-ccscc}
\end{figure}

The proof that the stack mechanism used in the pseudocode correctly finds the components and correctly sets $\pre$ values to~$\infty$ is virtually identical to that of Lemma~\ref{L:component-finding-correct}.  Just like Algorithm~T, Algorithm~C can stop early if all vertices have been previsited and the vertices not yet found to be in a strong component are guaranteed to be in the same component, specifically when all vertices have been previsited and the leaders stack contains only the start vertex of the search (in addition to the guard vertex): The last strong component contains all the vertices on the followers stack and the start vertex.  

Dijkstra~\cite{Dijkstra76} was apparently the first to publish a linear-time implementation of Algorithm~C. (An earlier version of his algorithm \cite{Dijkstra73,Dijkstra82} runs in $\OO(m + n^2)$ time.)  Later and independently Cheriyan and Mehlhorn~\cite{ChMe96} and Gabow~\cite{Gabow00} presented linear-time implementations of Algorithm~C.  (Cheriyan and Mehlhorn cited Dijstra's earlier, less efficient implementation, but were evidently unaware of his improved version.)

One difficulty in understanding Dijkstra's version is that he uses extendible arrays instead of stacks as his data structures (even though his arrays function as stacks), and he does not mention depth-first search, even though his version does a depth-first exploration, implemented non-recursively but inefficiently.  The implementations of Cheriyan and Mehlhorn and of Gabow differ only slightly from the one we have presented.  They both use a recursive depth-first search.  Dijkstra uses four stacks, two of which are the stacks of leaders and followers, $L$ and~$F$.  He uses two additional stacks to mimic enough of the effect of a depth-first search to implement Algorithm~C.  Finally, he uses a number per vertex.  One of the two stacks he uses to mimic a depth-first search is a stack~$S$ of vertices.  The other is a stack of positions in~$S$, one per vertex on~$L$.  In the middle of a search from start vertex $s$, $S$ contains $s$ on the bottom and the tips of all untraversed arcs from previsited vertices.  The next vertex to be previsited is the topmost unvisited vertex on~$S$.  When a vertex is previsited, the tips of its outgoing arcs are pushed onto~$S$ if they are unvisited.  Stack~$S$ can contain many copies of a vertex at the same time, one per entering arc from a visited vertex.  The total space overhead of Dijkstra's implementation is thus $m + 4n$, $m + n$ more than needed (as we shall discuss).  His implementation also does more memory accesses than needed, by constant factors in~$m$ and~$n$.

All three versions push \emph{all} vertices on~$F$, not just the followers.  Thus a vertex can be on both~$L$ and~$F$ at the same time.  Dijkstra's and Gabow's versions number the vertices according to their position on~$F$, not consecutively in preorder, and they store the vertex numbers, not the vertices themselves, on~$L$.  They also represent components using the vertex numbers, by assigning the vertices in each new component a component number larger than $n$ (and hence serving as $\infty$ in the leader test.)  Cheriyan and Mehlhorn's version does not set the numbers of vertices in complete components to~$\infty$ but instead maintains for each vertex a bit indicating whether it is in a declared strong component or not.  Except for Dijkstra's implementation of the exploration, all these differences are minor, although they do affect performance.  Dijkstra's version is the least space-efficient, requiring $m + 4n$ words of storage, and also the least time-efficient.  If the other two algorithms use the non-recursive implementation of depth-first exploration in Section~\ref{S-non-A}, they need $4n$ words of storage.  Our version reduces this to~$3n$ by using the same pointers to represent~$L$ and~$F$.   

Algorithm~C produces the strong components in the same order as Algorithm~T.  It produces the vertices in each component in preorder rather than postorder, as does the 1972 version of Algorithm~T but not the one we have presented here. Algorithm~T and the Algorithm~C are quite similar in the number of steps they do. To make a finer comparison between Algorithm~T and Algorithm~C, we count memory accesses, as suggested by Knuth~\cite[p.~460]{Knuth93}. (See Section~\ref{S-comp-VA}.)
We obtain upper bounds of~$7n$ and $9n$ for the number of memory accesses done by Algorithm~T and Algorithm~C, respectively, in addition to those required just to do the depth-first exploration.  Including the accesses required by an efficient non-recursive implementation of the depth-first exploration, the total number of memory accesses is at most $3m+16n$ for Algorithm~T, at most $3m+18n$ for Algorithm~C. See Section~\ref{S-non-recursive-dfs}. This may suggest that Algorithm~T is slightly more efficient, especially if $m$ is not much larger than~$n$, but this needs to be checked experimentally. Algorithm~T can also use the $\ptr$ fields to store the recursion stack for the depth-first search, as we discuss in Section~\ref{S-non-A}.  Thus it needs only~$2n$ extra words of storage, versus the~$3n$ of our implementation of Algorithm~C.

Algorithm~C superficially resembles another classical graph algorithm: Edmonds's \cite{Edmonds65} blossom-shrinking algorithm for finding an augmenting path in a graph with a matching, if the latter uses depth-first search to find an augmenting path~\cite{Tarjan83}.  Kameda and Munro~\cite{KaMu74} gave an implementation of Edmonds's algorithm much like the linear-time implementation of Algorithm~C.  Their implementation is not correct, however: In Edmonds's algorithm the graph is undirected, and there is a dynamic parity constraint imposed by the matching and the set of blossoms.  One can implement Edmonds's algorithm using a general disjoint set union data structure, resulting in a running time of $\OO(m\alpha_{m/n}(n))$.  The set union instances have the incremental-tree property, so a data structure for such instances can be used, resulting in an $\OO(m)$ time bound.  But the unite and find operations do \emph{not} obey a stack discipline, so a simple solution using a stack, as proposed by Kameda and Munro, does not work.  A much more recent paper~\cite{YuSh14} gives the same algorithm, with the same error.

There is another application of disjoint set union in which the stack-based implementation we have described here \emph{does} work.  This is an algorithm for verifying the correctness of a priority queue~\cite{FiMe99}.  The authors claim a linear time bound based on the incremental-tree set union method, but the method used in Algorithm~C also works and is much simpler. 

\subsection{Strong components via bidirectional search}

There is an even simpler way to find strong components using depth-first search, one that uses almost no data structures: Do a depth-first exploration.  Build a list of the vertices in postorder.  Then do a \emph{backward} exploration, choosing start vertices for successive searches in the reverse of the postorder generated by the forward search.  Each start vertex is the leader of its strong component, and the backward search from it visits exactly the vertices in its strong component.  A backward exploration is equivalent to a forward exploration on the \emph{reverse} of the problem graph.  The reverse is formed by reversing every arc (swapping its ends).  This is the \emph{bidirectional} algorithm for strong components, \emph{Algorithm~B}.  The forward exploration can stop once all vertices have been previsited; the backward exploration can stop once all vertices have been visited. 

In Algorithm~B, the backward search does not need to be depth-first; \emph{any} search order, such as breadth-first, will do.  Kosaraju discovered Algorithm~B but never published it (see \cite{AHU83}); Sharir~\cite{Sharir81} discovered it independently.  In spite of its brevity, its correctness relies on non-trivial properties of depth-first search, specifically Lemma~\ref{L:strong-component-leaders} and Theorem~\ref{T:component-top-order}:

\begin{theorem}\label{T-bidirectional}
Algorithm~B is correct.
\end{theorem}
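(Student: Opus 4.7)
The plan is to proceed by induction on the index $i$ of the backward search, maintaining the invariant that, immediately before the $i$-th backward search is launched, the previous start vertices $v_1,\ldots,v_{i-1}$ are precisely the leaders of $i-1$ distinct strong components $S_1,\ldots,S_{i-1}$ listed in strictly decreasing order of leader postvisit time, and the already-visited set (in the backward pass) equals $S_1\cup\cdots\cup S_{i-1}$. This invariant, once established, immediately yields the conclusion that every start vertex is a leader and that each backward search carves out exactly one strong component.

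For the inductive step I would do two things. First, identify $v_i$: by construction $v_i$ is the unvisited vertex of largest postvisit, and by the invariant the unvisited vertices are exactly $V\setminus(S_1\cup\cdots\cup S_{i-1})$. Let $S$ be the strong component containing $v_i$; since all vertices of $S$ are currently unvisited (components are either completely visited or completely unvisited by the invariant), Lemma~\ref{L:strong-component-leaders} applied to $S$ says the leader of $S$ is the vertex of maximum postvisit in $S$, so it must be $v_i$ itself. Hence $v_i$ is a leader, and I set $S_i=S$.

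Second, show the backward search from $v_i$ visits exactly the unvisited vertices in $S_i$. For inclusion, if $u\in S_i$ then there is a path in $G$ from $u$ to $v_i$; by Lemma~\ref{L:strong-component-paths} all its intermediate vertices lie in $S_i$, hence are unvisited, so the corresponding reverse path from $v_i$ can be followed by the backward search and $u$ is reached. For the converse (this is the main obstacle), suppose the search visits some $u$; then there is a path $u=w_k\to w_{k-1}\to\cdots\to w_0=v_i$ in $G$ all of whose intermediate $w_j$ are unvisited when traversed. Applying Theorem~\ref{T:component-top-order} along the arcs of this path, the leaders of the components $C_j\ni w_j$ satisfy $\text{leader}(C_k).\post\geq\cdots\geq\text{leader}(C_0).\post=v_i.\post$. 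If $u\notin S_i$ then $C_k\neq S_i$, so $\text{leader}(C_k)\neq v_i$; since postvisit times are distinct this inequality is strict, $\text{leader}(C_k).\post>v_i.\post$. By the invariant, every vertex with postvisit exceeding $v_i.\post$ has already been visited, so $\text{leader}(C_k)\in\{v_1,\ldots,v_{i-1}\}$ and therefore $C_k=S_j$ for some $j<i$; hence $u\in S_j$ was already visited, contradicting the assumption that the backward search from $v_i$ visited it.

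Combining these two facts, the invariant is restored after the $i$-th search, which completes the induction. The delicate point is the reverse inclusion in the last paragraph: it is precisely where Theorem~\ref{T:component-top-order} is needed, and where the fact that the backward searches process components in the order dictated by the forward postorder is doing real work.
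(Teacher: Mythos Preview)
Your proof is correct and follows essentially the same approach as the paper: induction on the backward searches, using Lemma~\ref{L:strong-component-leaders} to show each start vertex is a leader and Theorem~\ref{T:component-top-order} to show the backward search cannot escape the current component. The only cosmetic difference is that the paper applies Theorem~\ref{T:component-top-order} once, to an arc crossing the boundary of~$S$, whereas you chain it along the whole reverse path; both yield the same conclusion.
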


\begin{proof}
The proof is by induction on the number of strong components found.  For each vertex $v$, let $v.\post$ be the time $v$ is postvisited during the forward exploration.  Suppose the backward exploration correctly finds components before a backward search starting from $s$ in strong component~$S$.  Then the strong components containing all vertices $v$ with $v.\post > s.\post$ have been found before the search from $s$ starts.  When the search from $s$ starts, $S$ is not yet found, and $s$ is the vertex not yet in a found component with $s.\post$ maximum, so by Lemma~\ref{L:strong-component-leaders} $s$ is the leader of~$S$, and no vertices in~$S$ have been visited by the backward search.  Hence the backward search from $s$ visits all vertices in~$S$.  By Theorem~\ref{T:component-top-order}, if there is an arc from~$x$ to~$y$ with~$y$ but not~$x$ in~$S$, the component containing~$x$ is found before the search from $s$ begins.  Hence the search from $s$ visits only the vertices in~$S$.
\end{proof}

Algorithm~B runs in $\OO(m)$ time.  We omit the straightforward implementation.  Although Algorithm~B is conceptually simple, it requires two searches, not one.  Assuming the graph representation is read-only, Algorithm B requires twice as much space as Algorithms T and C for the graph representation, since both forward and backward arc lists for each vertex must be represented: $4m+2n$ words of storage versus $2m+n$ (two pointers per vertex to forward and backward incidence lists plus two ends of each arc plus two $\next$ pointers per arc).  The extra space needed for finding strong components is $2n$, the same as for Algorithm~T. If an efficient non-recursive implementation (see Section~\ref{S-non-recursive-dfs}) is used for the forward search and quick search (see the appendix) is used for the backward search, Algorithm~B takes $6m +14n$ memory accesses, versus $3m+16n$ for Algorithm~T, making Algorithm~T noticeably more efficient.  If the graph representation can be modified and then restored, the space needed by Algorithm B can be reduced, but at the cost of significant additional time to construct the incoming arc lists after the forward search and then restore the outgoing arc lists after the backward search.

\section{Non-Recursive Depth-First Search}\label{S-non-recursive-dfs}

\begin{figure}[t]
\centering
\parbox{2in}{
\begin{algorithm}[H]
\Fn{$\dfexp(V)$}
{
    $\start(V)$ \;
    $P \gets \stack()$ \;
    \For{$s\in V$}
    {\If{$\unvisited(s)$}{$\dfs(s)$ \;}}
}
\end{algorithm}
}
\hspace*{0.2in}
\parbox{2.8in}{
\begin{algorithm}[H]
\Fn{$\dfs(s)$}
{
    $v \gets s$ \;
    $\previsit(v)$ \;
    $a \gets v.\first$ \;
    \While{$\true$}
    {
        \eIf{$a \neq \NULL$}
        {
            $w \gets a.\tip$ \;
            $\Advance(v, a, w)$ \;
            \If{$\unvisited(w)$}
            {
                $\treeadvance(v, a, w)$ \;
                $\previsit(w)$  \;    
                $\FORWARD$ \;
                {\bf continue} \;
            }            
            $\nontreetraverse(v, a, w)$ \;
        }
        {
            $\postvisit(v)$ \;
            \If{$v = s$}{\Return \;}
            $\BACKWARD$ \;
            $\treeretreat(v,a, w)$ \;
        }
        $\retreat(v, a, w)$ \;                 
        $a \gets a.\next$ \;
    }
}
\end{algorithm}
}
\caption{A non-recursive implementation of a generic depth-first exploration. $\FORWARD$ and $\BACKWARD$ perform the necessary updates of $v$, $a$ and $w$ using a stack~$P$ of vertices or arcs.}
\label{F-dfs-non}
\end{figure}

Although the recursive view of depth-first search presented in Section~\ref{S-dfs} is compact and elegant, a non-recursive implementation allows the programmer to choose the representation of the recursion stack, possibly saving time and space.  This may well be critical in handling very large graphs.  For example, SciPy~\cite{Leslie}, a widely used library for scientific computing, uses a non-recursive implementation, which saves space and makes the algorithm significantly faster.\footnote{The non-recursive SciPy implementation differs from the non-recursive implementations suggested here and can possibly be improved.}  

In this section we study two alternative non-recursive implementations of depth-first search. They both share the
high-level code given in Figure~\ref{F-dfs-non}. They differ in the implementation of the two short code segments \FORWARD\ and \BACKWARD\ responsible for explicitly handling a stack $P$ that replaces the recursion stack. In Implementation~V, for \emph{\underline{V}ertex stack}, $P$ is a stack of vertices. In Implementation~A, for \emph{\underline{A}rc stack}, $P$ is a stack of arcs. The stack~$P$ represents the current path.

As in the recursive implementation of Figure~\ref{F-dfs}, $s$ is the start vertex of the current search, $v$ is the current vertex, and~$a$ is the currently explored arc. The non-recursive depth-first search code of Figure~\ref{F-dfs-non} is structured as a while loop that iterates over the current arc~$a$. There are three cases, depending on whether~$a$ is a tree arc, a non-tree arc, or $\NULL$. The implementation again includes stubs for the various critical steps of the algorithm: $\previsit$, $\postvisit$ etc. These stubs can be viewed either as function calls or as short code segments that are replaced inline. As in the recursive formulation, $\start(V)$ must mark all vertices unvisited, and $\previsit(v)$ must mark~$v$ visited.

\subsection{Implementation~V}\label{S-non-V}

\begin{figure}[t]
\SetKwProg{Fn}{}{$\;\equiv$}{}
\centering
\parbox{2in}{
\begin{algorithm}[H]
\Fn{$\FORWARD$}
{
    $v.\arc \gets a$ \;
    $P.\push(v)$ \;
    $v \gets w$ \;
    $a \gets v.\first$ \;
}
\end{algorithm}
\vspace*{0.3in}
\begin{algorithm}[H]
\Fn{$\BACKWARD$}
{
    $w \gets v$ \;
    $v \gets P.\pop()$ \;
    $a \gets v.\arc$ \;
}
\end{algorithm}
\vspace*{0.3in}
\begin{center}
(a)
\end{center}
}
\hspace*{0.2in}
\parbox{2in}{
\begin{algorithm}[H]
\Fn{$\FORWARD$}
{
    $v.\arc \gets a$ \;
    \If{$P.\Top()\ne v$}{$P.\push(v)$ \;}
    $v \gets w$ \;
    $a \gets v.\first$ \;
}
\end{algorithm}
\vspace*{0.2in}
\begin{algorithm}[H]
\Fn{$\BACKWARD$}
{
    $w \gets v$ \;
    \If{$P.\Top()=v$}{$P.\pop()$}
    $v \gets P.\Top()$ \;
    $a \gets v.\arc$ \;
}
\end{algorithm}
\begin{center}
(b)
\end{center}
}
\caption{Implementation V. (a) A basic implementation using a stack of vertices. (b) A possibly optimized version that does fewer push and pop operations at the cost of a few more tests.}
\label{F-dfs-non-V}
\end{figure}

In Implementation~V, the stack~$P$ holds the vertices on the current path, not including the current vertex~$v$. In addition, for each vertex~$u$ on~$P$, the field $u.\arc$ holds the arc out of~$u$ that is on the current path.

The corresponding implementations of \FORWARD\ and \BACKWARD\ are given in Figure~\ref{F-dfs-non-V}(a). (The~$\equiv$ following $\FORWARD$ and $\BACKWARD$ indicates that the  code segments should be replaced inline and should not be viewed as function calls.) 

\FORWARD\ corresponds to advancing from~$v$ to~$w$ along arc~$a$: $v.\arc$ is set to~$a$, $v$ is pushed onto the stack~$P$, $v$ becomes $w$, and the current arc $a$ becomes $v.\first$, the first outgoing arc of~$v$.

\BACKWARD\ corresponds to retreating from~$v$. First $w$ is set to~$v$, the vertex retreated from. The new vertex~$v$ is obtained by popping the stack~$P$. The current arc $a$ becomes $v.\arc$ but would soon be updated by the assignment $a\gets a.\next$ at the end of the while loop. (Vertices~$v$ and~$w$ and arc~$a$ are used by the stub $\treeretreat(v,a,w)$ that immediately follows \BACKWARD.)

A slightly optimized version of Implementation~V is given in Figure~\ref{F-dfs-non-V}(b). In the version of Figure~\ref{F-dfs-non-V}(a), when a retreat from~$w_1$ to~$v$ is followed by an advance from $v$ to~$w_2$, $v$ is first popped from~$P$ and then immediately pushed back onto~$P$. In Figure~\ref{F-dfs-non-V}(b) this is avoided at the cost of two simple tests that only involve values that can be stored in registers. In this variant, each non-leaf vertex is pushed and then popped from the stack~$P$ exactly once.

Implementation~V requires two extra words of storage per vertex on $P$, one to represent the stack, the other to indicate the current tree arc out of the vertex, for a total of at most $2n$ words. This representation is redundant: The start vertex~$s$ plus the tips of the arcs on the current path are exactly the vertices on the current path. This observation leads to our second non-recursive implementation, Implementation~A, considered below.

If the graph representation is not read-only but can be modified during the search and then restored, one can encode the current arc pointers in the graph representation, reducing the extra storage to that needed for the vertex stack.  We use $v.\first$ to point to the current arc of~$v$.  When stepping through the outgoing arc list of~$v$, we reverse the $\next$ pointers, so that when we get to the end of the list, $v.\first$ actually points to the \emph{last} arc on the list.  During the postorder visit to~$v$, we reverse all the $\next$ pointers again, restoring the original list order. If we do not care about the order, we do not need to do the reversal.  An alternative is to make each list of outgoing arcs circular, with a bit indicating the first arc on the list.  Then stepping through the list requires merely changing the $\first$ pointer, and the list stays in the same order throughout the process.

In Appendix~\ref{A-quick} we discuss a search method called \emph{Quick Search} that uses a vertex stack, as does Implementation~V, but no other data structure. Quick search is an efficient generic search method, but it does not have the specific properties of depth-first search needed in several interesting applications, including that of finding strong components in linear time.

\subsection{Implementation~A}\label{S-non-A}

In Implementation A, the stack~$P$ holds the arcs on the current path. The resulting representation is non-redundant and the $u.\arc$ fields are no longer needed, saving space and also some time.

\begin{figure}[t]
\SetKwProg{Fn}{}{$\;\equiv$}{}
\centering
\parbox{2in}{
\begin{algorithm}[H]
\Fn{$\FORWARD$}
{
    $P.\push(a)$ \;
    $v \gets w$ \;
    $a \gets v.\first$ \;
}
\end{algorithm}
\vspace*{0.3in}
\begin{algorithm}[H]
\Fn{$\BACKWARD$}
{
    $a \gets P.\pop()$ \;
    $w \gets v$ \;
    \eIf{$P.\Top()=\NULL$}
    {$v\gets s$\;}
    {$v\gets P.\Top().\tip$\;}
}
\end{algorithm}
\begin{center}
(a)
\end{center}
}
\hspace*{0.2in}
\parbox{2in}{
\begin{algorithm}[H]
\Fn{$\FORWARD$}
{
    $w.\ptr \gets P$ \;
    $P \gets a$ \;
    $v \gets w$ \;
    $a \gets v.\first$ \;
}
\end{algorithm}
\vspace*{0.2in}
\begin{algorithm}[H]
\Fn{$\BACKWARD$}
{
    $a \gets P$ \;
    $P \gets v.\ptr$ \;
    $w \gets v$ \;
    \eIf{$P=\NULL$}
    {$v\gets s$\;}
    {$v\gets P.\tip$\;}

}
\end{algorithm}
\begin{center}
(b)
\end{center}
}
\caption{Implementation A. (a) An implementation using an abstract stack of arcs. (b) With a direct implementation of the stack as a singly linked list with indirection via vertex pointers. The assignment $P\gets \stack()$ in $\dfe(V)$ should be replaced by $P\gets\NULL$.}
\label{F-dfs-non-A}
\end{figure}

\begin{figure}[t]
\parbox{1.6in}{
\begin{algorithm}[H]
\Fn{$\stack()$}
{
    \Return $\NULL$ \;
}
\end{algorithm}
}
\hspace*{-0.2in}
\parbox{1.7in}{
\begin{algorithm}[H]
\Fn{$P.\push(a)$}
{
    $a.\tip.\ptr \gets P$ \;
    $P \gets a$ \;
}
\end{algorithm}
}
\hspace*{-0.2in}
\parbox{1.7in}{
\begin{algorithm}[H]
\Fn{$P.\pop()$}
{
    $\Top \gets P$ \;
    $P \gets P.\tip.\ptr$ \;
    \Return $\Top$ \;
}
\end{algorithm}
}
\parbox{1.6in}{
\begin{algorithm}[H]
\Fn{$P.\Top()$}
{
    \Return $P$ \;
}
\end{algorithm}
}
\caption{Implementation of an arc stack as a linked list with one level of indirection.}
\label{F-stack-indirect}
\end{figure}

The corresponding implementations of \FORWARD\ and \BACKWARD\ are given in Figure~\ref{F-dfs-non-A}. The code in Figure~\ref{F-dfs-non-A}(a) uses a generic stack, e.g., the one given in Figure~\ref{F-stack}. This is quite wasteful in our case, as each arc $a$ on the stack would need to have a pointer $a.\ptr$ to the arc just below it on the stack. (Alternatively, the stack can be implemented as an exogenous singly-linked list, but this is also not optimal.)

An elegant way of maintaining the stack, proposed by Tarjan~\cite{Tarjan83b}, is to use indirection and store the pointer to the arc below $a$ on the stack in $a.\tip.\ptr$.
Equivalently, if $u$ is a vertex on the current path, $u.\ptr$ is the tree arc entering the \emph{parent} of~$u$, if~$u$ has a grandparent. The corresponding implementation of the arc stack is given in Figure~\ref{F-stack-indirect}.
Just before postvisiting a vertex $v$, Implementation~A pops from $P$ the tree arc entering $v$. This leaves $v.\ptr$ free for other uses after~$v$ is postvisited.
A complete version of Implementation~$A$, with a direct implementation of the arc stack~$P$ using this approach is given in Figure~\ref{F-dfs-non-A}(b). We treat Figure~\ref{F-dfs-non-A}(b) as the canonical version of Implementation~A.

\subsection{Comparing Implementations V and A}\label{S-comp-VA}

Implementation V is more straightforward, but Implementation A is slightly more efficient in both space and time.  We expect both implementations to be significantly faster than the recursive implementation.

Implementation~V needs at most $2n$ extra words of storage, to hold $u.\arc$ and $u.\ptr$ for each vertex on the active path. Implementation~A needs at most $n$ extra words to hold $u.\ptr$. In some cases, e.g., Algorithm~T for computing strong components, the space used by the $u.\ptr$ pointers can be reused for other purposes. In such cases Implementation~A needs no extra storage at all.

To make a finer comparison of Implementations~V and~A, as well as Algorithms~T,B, and~C, we count memory accesses (reads and writes) using a model of Knuth~\cite{Knuth93}. In this model, an algorithm is allowed to hold a small constant number of values in registers, access to which is free. Computations on values stored in registers is also free. Each read or write of a word from/to main memory counts as one memory access. This is quite a simplistic model that assumes that memory access is the dominating factor in the running time of an algorithm. Bounds derived in this model should thus be taken with a large grain of salt: They do not take computation into account and they do not consider the memory hierarchy, e.g., the existence of a cache, of the machine used. They are only useful for a rough comparisons for implementations or algorithms with the same asymptotic running time.

We state bounds for Implementations~V and~A in terms of~$m$, the number of arcs, and~$n$, the number of vertices, and ignore additive constants and dependencies on the number of search start vertices and the number of strong components. Assuming that $v$, $a$, and $w$ are stored in registers, Implementation~V does at most $3m+10n$ memory accesses, while Implementation~A does at most $3m+9n$ memory accesses, a small `moral' victory for Implementation~A. The $3m$ in both these bounds comes from the fact that for each arc~$a$ we read $a.\tip$, then $a.\tip.\pre$ (or $a.\tip.\low$), and finally $a.\next$. Determining the coefficients of~$n$ in these bounds requires a more careful examination of the two implementations. We omit the relatively straightforward, but somewhat tedious, derivations of the exact bounds for Implementations~V and~A and for Algorithms~T, C, and B.

\subsection{Non-recursive implementations of strong component algorithms}

Implementation A uses a pointer per vertex to store the recursion stack.  If we implement Algorithm~T using Implementation A, we can avoid this extra space overhead by letting each vertex pointer $v.\ptr$ do triple duty: Before $v$ is postvisited $v.\ptr$ is a link in the arc stack~$P$, after $v$ is postvisited it is a link in the followers stack~$F$, and after $v$ is added to a strong component it points to the leader of this component.  The total space needed by this implementation is $2n$ words, one integer between $0$ and $2n+1$ per vertex (if we represent~$\infty$ by $2n+1$), and one pointer capable of pointing to a vertex or arc per vertex.  The total number of memory accesses (assuming that $v.\low$ is stored in a register when $v$ is the current vertex) is at most $3m + 16n$, only $7n$ more than that needed just to do the search. 

Pearce's version~\cite{Pearce16} of Algorithm~T assumes the graph representation is read-only and uses Implementation~V instead of Implementation~A to do the search. His implementation requires $3n$ words and $n$ bits of extra storage: Only one of the two pointers per vertex needed for the search can be shared with the one pointer per vertex needed by Algorithm~T, and he uses an extra leader bit per vertex, instead of encoding these bits in the $\low$ values.

An alternative way to implement Algorithm~T non-recursively using Implementation~A is to store the recursion stack~$P$ and the followers stack~$F$ at the two ends of a single array of~$n$ positions.  This alternative needs to use the $\low$ values to represent the strong components as they are found.  The extra space needed remains~$2n$ words.

If we implement Algorithm~C using Implementation~V or~A to do the search, the space overhead is $n$ words more than needed by Algorithm~T: $4n$ versus $3n$ with Implementation~V, $3n$ versus~$2n$ with implementation A.  This is because Algorithm~C adds vertices to the leaders stack when they are previsited, which prevents the representations of the recursion stack and the leaders stack from sharing space.  The number of memory accesses is also larger by $2n$, at most $3m+18n$ with Implementation A.

If the graph representation is read-only, Algorithm B requires twice as much space as Algorithms T and C for the graph representation, since both forward and backward arc lists for each vertex must be represented.  The extra space needed by Algorithm B for finding strong components is $2n$, the same as Algorithm~ T: one pointer per vertex to implement the graph explorations and one pointer per vertex used for two purposes: to represent a list of the vertices in reverse postorder computed during the forward exploration and to store the mapping from vertices to their component leaders computed during the backward exploration.  These computations require $2n$ writes, $n$ during each search.  The backward exploration can reuse the $\visited$ bits set by the forward exploration and does not need to reinitialize them, saving $n$ reads and $n$ writes.  If Implementation~A is used for the forward search and quick search (see the appendix) is used for the backward search, Algorithm~B takes $6m +14n$ memory accesses, versus $3m+16n$ for Algorithm~T.

If graph representation is allowed to be changed during the search and then restored, Algorithm B can use the same graph representation as Algorithms T and C, but it must convert the representation to incoming arc lists after the forward search and restore it after the backward search.  This requires significant additional time.

If the graph representation is changeable, Implementation~A can be modified to use \emph{no} extra space, as long as each $\first$ and $\next$ pointer is capable of indicating a vertex or an arc, and vertices and arcs can be distinguished~\cite{Tarjan83b}.  This gives a way to implement all three strong components algorithms (and indeed \emph{any} depth-first search application) without using extra space for the graph exploration.

\section{Final Remarks}

We have presented and proved correct three linear-time algorithms for finding strong components: Algorithm~T, the original, Algorithm~C, based on the intuitive idea of finding and contracting cycles, and Algorithm~B, a simple algorithm that does both a forward and a backward search.  Though simple, Algorithm~B needs more time and needs both outgoing and incoming arc lists.  Algorithms~T and C have similar running times, but T uses noticeably less space and slightly less time. 
Finding strong components is a basic problem in algorithmic graph theory, and all three of the algorithms we have presented have found their way into textbooks.  Aho et al. \cite{AHU74}, Even \cite{Ev79}, Gibbons \cite{Gibbons85}, Knuth \cite{Knuth93}, Manber \cite{Manber89} and Sedgewick \cite{Sedgewick88} present versions of Algorithm~T. Dijkstra \cite{Dijkstra76} (in a monograph, not a textbook), Mehlhorn and N{\"a}her \cite{MeNa99}, and Sanders et al. \cite{SMDR19} present versions of Algorithm~C. Aho et al. \cite{AHU83}, Cormen et al. \cite{CLRS09}, Even \cite{EvEv12}, Sedgewick and Wayne \cite{SeWa14} and Dasgupta et al. \cite{DPV08} present Algorithm~B. Erickson \cite{Erickson19} presents both T and B. Finally, Sedgewick \cite{Sedgewick01C++} presents versions of all three algorithms.  Recent books tend to favor Algorithm~B, presumably because it is simpler to state and to prove correct.  But if one actually wants to compute the strong components of a graph, we favor Algorithm~T, since it is more efficient by constant factors in both space and time than the other two algorithms and easy to implement. 

For additional work on reducing the extra space needed to run depth-first search, assuming some specialized representations of the input graphs, see Hagerup~\cite{Hagerup20} and the references therein.

\section*{Acknowledgement}

We thank Don Knuth and an anonymous referee for their close and careful reading of this paper and for their insightful comments, which helped us greatly improve the presentation. 

\bibliographystyle{alpha}
\bibliography{bibliography}

\newcommand{\etalchar}[1]{$^{#1}$}
\begin{thebibliography}{GKR{\etalchar{+}}91}

\bibitem[AGU72]{AGU72}
Alfred~V. Aho, M.~R. Garey, and Jeffrey~D. Ullman.
\newblock The transitive reduction of a directed graph.
\newblock {\em {SIAM} J. Comput.}, 1(2):131--137, 1972.

\bibitem[AHU74]{AHU74}
A.V. Aho, J.E. Hopcroft, and J.D. Ullman.
\newblock {\em The design and analysis of computer algorithms}.
\newblock Addison-Wesley, 1974.

\bibitem[AHU83]{AHU83}
Alfred~V. Aho, John~E. Hopcroft, and Jeffrey~D. Ullman.
\newblock {\em Data structures and algorithms}.
\newblock Addison-Wesley, 1983.

\bibitem[ATG{\etalchar{+}}14]{ATGRZ14}
Stephen Alstrup, Mikkel Thorup, Inge~Li G{\o}rtz, Theis Rauhe, and Uri Zwick.
\newblock Union-find with constant time deletions.
\newblock {\em {ACM} Trans. Algorithms}, 11(1):6:1--6:28, 2014.

\bibitem[BDK09]{BDK09}
Piotr Berman, Bhaskar DasGupta, and Marek Karpinski.
\newblock Approximating transitive reductions for directed networks.
\newblock In {\em Proc.\ of 11th WADS}, pages 74--85, 2009.

\bibitem[CCL{\etalchar{+}}19]{CCLMT19}
Ran Chen, Cyril Cohen, Jean{-}Jacques L{\'{e}}vy, Stephan Merz, and Laurent
  Th{\'{e}}ry.
\newblock Formal proofs of {Tarjan}'s strongly connected components algorithm
  in {Why3}, {Coq} and {Isabelle}.
\newblock In {\em Proc.\ of 10th ITP}, pages 13:1--13:19, 2019.

\bibitem[CLRS09]{CLRS09}
Thomas~H. Cormen, Charles~E. Leiserson, Ronald~L. Rivest, and Clifford Stein.
\newblock {\em Introduction to algorithms}.
\newblock MIT press, 3nd edition, 2009.

\bibitem[CM96]{ChMe96}
Joseph Cheriyan and Kurt Mehlhorn.
\newblock Algorithms for dense graphs and networks on the random access
  computer.
\newblock {\em Algorithmica}, 15(6):521--549, 1996.

\bibitem[Dij75]{Dijkstra73}
Edsger~W. Dijkstra.
\newblock Finding the maximum strong components in a directed graph.
\newblock Manuscript EWD 376. Available at
  \url{https://www.cs.utexas.edu/users/EWD/
  transcriptions/EWD03xx/EWD376.html}, 1975.

\bibitem[Dij76]{Dijkstra76}
Edsger~W. Dijkstra.
\newblock {\em A Discipline of Programming}.
\newblock Prentice-Hall, 1976.

\bibitem[Dij82]{Dijkstra82}
Edsger~W. Dijkstra.
\newblock {\em Selected Writings on Computing: A Personal Perspective}.
\newblock Springer-Verlag, 1982.

\bibitem[DPV08]{DPV08}
Sanjoy Dasgupta, Christos~H Papadimitriou, and Umesh~Virkumar Vazirani.
\newblock {\em Algorithms}.
\newblock McGraw-Hill, 2008.

\bibitem[DR78]{DuRe78}
Iain~S. Duff and John~K. Reid.
\newblock An implementation of {Tarjan}'s algorithm for the block
  triangularization of a matrix.
\newblock {\em {ACM} Trans. Math. Softw.}, 4(2):137--147, 1978.

\bibitem[Edm65]{Edmonds65}
Jack Edmonds.
\newblock Paths, trees, and flowers.
\newblock {\em Canadian Journal of mathematics}, 17:449--467, 1965.

\bibitem[EE12]{EvEv12}
Shimon Even and Guy Even.
\newblock {\em Graph algorithms}.
\newblock Cambridge University Press, 2012.

\bibitem[EK77]{EvKu77}
J.~Eve and Reino Kurki{-}Suonio.
\newblock On computing the transitive closure of a relation.
\newblock {\em Acta Informatica}, 8:303--314, 1977.

\bibitem[Eri19]{Erickson19}
Jeff Erickson.
\newblock {\em Algorithms}.
\newblock Independently published, 2019.
\newblock Freely available at
  \url{https://jeffe.cs.illinois.edu/teaching/algorithms/}.

\bibitem[Eve79]{Ev79}
Shimon Even.
\newblock {\em Graph algorithms}.
\newblock Computer Science Press, 1979.

\bibitem[FM99]{FiMe99}
Ulrich Finkler and Kurt Mehlhorn:.
\newblock Checking priority queues.
\newblock In {\em Proc.\ of 10th SODA}, pages 901--902, 1999.

\bibitem[FS89]{FrSa89}
Michael~L. Fredman and Michael~E. Saks.
\newblock The cell probe complexity of dynamic data structures.
\newblock In {\em Proc.\ of 21st STOC}, pages 345--354, 1989.

\bibitem[Gab00]{Gabow00}
Harold~N. Gabow.
\newblock Path-based depth-first search for strong and biconnected components.
\newblock {\em Inf. Process. Lett.}, 74(3-4):107--114, 2000.

\bibitem[Gib85]{Gibbons85}
Alan Gibbons.
\newblock {\em Algorithmic graph theory}.
\newblock Cambridge university press, 1985.

\bibitem[GKR{\etalchar{+}}91]{GKRST91}
Phillip~B. Gibbons, Richard~M. Karp, Vijaya Ramachandran, Danny Soroker, and
  Robert~Endre Tarjan.
\newblock Transitive compaction in parallel via branchings.
\newblock {\em J. Algorithms}, 12(1):110--125, 1991.

\bibitem[GT85]{GaTa85}
Harold~N. Gabow and Robert~Endre Tarjan.
\newblock A linear-time algorithm for a special case of disjoint set union.
\newblock {\em J. Comput. Syst. Sci.}, 30(2):209--221, 1985.

\bibitem[Hag20]{Hagerup20}
Torben Hagerup.
\newblock Space-efficient {DFS} and applications to connectivity problems:
  Simpler, leaner, faster.
\newblock {\em Algorithmica}, 82(4):1033--1056, 2020.

\bibitem[HT73a]{HoTa73}
John~E. Hopcroft and Robert~Endre Tarjan.
\newblock Dividing a graph into triconnected components.
\newblock {\em {SIAM} J. Comput.}, 2(3):135--158, 1973.

\bibitem[HT73b]{HoTa73H}
John~E. Hopcroft and Robert~Endre Tarjan.
\newblock Efficient algorithms for graph manipulation {[H]} (algorithm 447).
\newblock {\em Commun. {ACM}}, 16(6):372--378, 1973.

\bibitem[HT74]{HoTa74}
John~E. Hopcroft and Robert~Endre Tarjan.
\newblock Efficient planarity testing.
\newblock {\em Journal of the ACM}, 21(4):549--568, 1974.

\bibitem[Jia93]{Jiang93}
Bin Jiang.
\newblock {I/O}-and {CPU}-optimal recognition of strongly connected components.
\newblock {\em Inf. Process. Lett.}, 45(3):111--115, 1993.

\bibitem[Kel20]{Kelly20a}
Terence Kelly.
\newblock Efficient graph search.
\newblock {\em {ACM} Queue}, 18(4):25--36, 2020.

\bibitem[KM74]{KaMu74}
Tiko Kameda and J.~Ian Munro.
\newblock A {$O(|V|{\cdot}|E|)$} algorithm for maximum matching of graphs.
\newblock {\em Computing}, 12(1):91--98, 1974.

\bibitem[Knu93]{Knuth93}
Donald~Ervin Knuth.
\newblock {\em The {Stanford GraphBase}: a platform for combinatorial
  computing}.
\newblock ACM Press, 1993.

\bibitem[Knu14]{Knuth14}
Donald~E. Knuth.
\newblock Twenty questions for {Donald Knuth}, May 20, 2014.

\bibitem[Knu21]{Knuth21}
Donald~E. Knuth.
\newblock {\em The art of computer programming}, volume 4, Pre-fascicle 12A,
  Combinatorial Algorithms.
\newblock Addison-Wesley, 2021.

\bibitem[K{\"o}n90]{Konig90}
D{\'e}nes K{\"o}nig.
\newblock {\em Theory of finite and infinite graphs}.
\newblock Birkhäuser, 1990.
\newblock English translation of ``Theorie der endlichen und unendlichen
  Graphen'', Akademische Verlagsgesellschaft, Leipzig 1936.

\bibitem[KRY95]{KRY95}
Samir Khuller, Balaji Raghavachari, and Neal~E. Young.
\newblock Approximating the minimum equivalent digraph.
\newblock {\em {SIAM} J. Comput.}, 24(4):859--872, 1995.

\bibitem[KT05]{KlTa05}
Jon Kleinberg and Éva Tardos.
\newblock {\em Algorithm design}.
\newblock Pearson, 2005.

\bibitem[Les13]{Leslie}
Tim Leslie.
\newblock Strongly connected components - an optimal algorithm as implemented
  in {SciPy}.
\newblock \url{https://www.timl.id.au/scc}, 2013.

\bibitem[Luc82]{Lucas82}
{\'E}douard Lucas.
\newblock {\em R{\'e}cr{\'e}ations math{\'e}matiques}.
\newblock Gauthier-Villars et fils, 1882.

\bibitem[Man89]{Manber89}
Udi Manber.
\newblock {\em Introduction to algorithms: a creative approach}.
\newblock Addison-Wesley, 1989.

\bibitem[MN99]{MeNa99}
Kurt Mehlhorn and Stefan N{\"a}her.
\newblock {\em {LEDA}: A platform for combinatorial and geometric computing}.
\newblock Cambridge university press, 1999.

\bibitem[Mun71]{Munro71b}
J.~Ian Munro.
\newblock Efficient determination of the transitive closure of a directed
  graph.
\newblock {\em Inf. Process. Lett.}, 1(2):56--58, 1971.

\bibitem[NS94]{NuSo94}
Esko Nuutila and Eljas Soisalon{-}Soininen.
\newblock On finding the strongly connected components in a directed graph.
\newblock {\em Inf. Process. Lett.}, 49(1):9--14, 1994.

\bibitem[Ovi98]{Ovid}
Ovid.
\newblock {\em Metamorphoses}.
\newblock Oxford University Press, 1998.
\newblock Translated by Alan David Melville.

\bibitem[Pea16]{Pearce16}
David~J. Pearce.
\newblock A space-efficient algorithm for finding strongly connected
  components.
\newblock {\em Inf. Process. Lett.}, 116(1):47--52, 2016.

\bibitem[Pur70]{Purdom70}
Paul Purdom.
\newblock A transitive closure algorithm.
\newblock {\em BIT Numerical Mathematics}, 10(1):76--94, 1970.

\bibitem[Sed88]{Sedgewick88}
Robert Sedgewick.
\newblock {\em Algorithms}.
\newblock Addison-Wesley, 2nd edition, 1988.

\bibitem[Sed01]{Sedgewick01C++}
Robert Sedgewick.
\newblock {\em Algorithms in {C++}}.
\newblock Addison-Wesley, 2001.

\bibitem[Sha81]{Sharir81}
Micha Sharir.
\newblock A strong-connectivity algorithm and its applications in data flow
  analysis.
\newblock {\em Computers \& Mathematics with Applications}, 7(1):67--72, 1981.

\bibitem[Sim89]{Simon89}
Klaus Simon.
\newblock Finding a minimal transitive reduction in a strongly connected
  digraph within linear time.
\newblock In {\em Proc.\ of 15th WG}, pages 245--259, 1989.

\bibitem[SMDD19]{SMDR19}
Peter Sanders, Kurt Mehlhorn, Martin Dietzfelbinger, and Roman Dementiev.
\newblock {\em Sequential and Parallel Algorithms and Data Structures}.
\newblock Springer, 2019.

\bibitem[SW64]{SaWe64}
R.W.H. Sargent and A.W. Westerberg.
\newblock {SPEED-UP} in chemical engineering design.
\newblock {\em Trans. Inst. Chem. Eng}, 42:190--197, 1964.

\bibitem[SW14]{SeWa14}
Robert Sedgewick and Kevin Wayne.
\newblock {\em Algorithms}.
\newblock Addison-Wesley, 4th edition, 2014.

\bibitem[Tar72]{Tarjan72}
Robert~E. Tarjan.
\newblock Depth-first search and linear graph algorithms.
\newblock {\em SIAM Journal on Computing}, 1(2):146--160, 1972.

\bibitem[Tar75]{Tarjan75}
Robert~E. Tarjan.
\newblock Efficiency of a good but not linear set union algorithm.
\newblock {\em Journal of the ACM}, 22(2):215--225, 1975.

\bibitem[Tar83a]{Tarjan83}
Robert~E. Tarjan.
\newblock {\em Data structures and network algorithms}.
\newblock SIAM, 1983.

\bibitem[Tar83b]{Tarjan83b}
Robert~Endre Tarjan.
\newblock Space-efficient implementations of graph search methods.
\newblock {\em {ACM} Trans. Math. Softw.}, 9(3):326--339, 1983.

\bibitem[Tra63]{Trakhtenbrot63}
Boris~Abramovich Trakhtenbrot.
\newblock {\em Algorithms And Automatic Computing Machines (Topics In
  Mathematics)}.
\newblock D.C. Heath and Company, 1963.
\newblock Translated from Russian.

\bibitem[Vet01]{Vetta01}
Adrian Vetta.
\newblock Approximating the minimum strongly connected subgraph via a matching
  lower bound.
\newblock In {\em Proc.\ of 12th SODA}, pages 417--426. {ACM/SIAM}, 2001.

\bibitem[YS14]{YuSh14}
Chengcheng Yu and Zhonge Sheng.
\newblock A matching algorithm based on the depth first search for the general
  graph.
\newblock In {\em 2nd International Conference on Information, Electronics and
  Computer}, pages 59--63. Atlantis Press, 2014.

\bibitem[ZNI03]{ZNI03}
Liang Zhao, Hiroshi Nagamochi, and Toshihide Ibaraki.
\newblock A linear time 5/3-approximation for the minimum strongly-connected
  spanning subgraph problem.
\newblock {\em Inf. Process. Lett.}, 86(2):63--70, 2003.

\end{thebibliography}

\appendix
\newpage

\section{Quick Search versus Depth-First Search}\label{A-quick}

A search algorithm often confused with depth-first search is an algorithm that Knuth \cite[Section 7.4.1.2, Algorithm Q]{Knuth21} calls \emph{quick search}. Quick search is an analog of breadth-first search that uses a stack~$S$ of vertices instead of a queue.  To explore a graph it does the following:
Mark all vertices unvisited, initialize~$S$ to be empty, and repeat the appropriate one of the following steps until all vertices are visited and~$S$ is empty:

\begin{itemize}
\item[(i)] $S$ is empty and some vertex is unvisited: Let $s$ be any unvisited vertex.  Push $s$ onto~$S$. This initiates a search starting from $s$.

\item[(ii)] $S$ is non-empty:  Pop the top vertex, say $v$, from~$S$.  Visit $v$.  Traverse each arc out of~$v$. To traverse an arc from~$v$ to~$w$, if $w$ is unvisited and not on~$S$, push $w$ onto~$S$.
\end{itemize}

If one replaces the stack~$S$ in quick search by a queue, one obtains the standard implementation of breadth-first search.

Quick search does \emph{not} in general visit the vertices in the same order (either preorder or postorder) as any depth-first search, as one can easily show with appropriate counterexamples.  One can use quick search for simple tasks such as building a spanning forest, but it does not seem to have nice properties that lead to fast algorithms for many more-complicated tasks such as finding strong components.  It is faster than either of the implementations in Section~\ref{S-non-recursive-dfs}, however: It needs at most $3m+5n$ memory accesses~\cite{Knuth21}, saving $4n$ over Implementation A of depth-first search.  The same $3m+5n$ bound holds for the corresponding implementation of breadth-first search.  One can use either quick search or breadth-first search to do the backward search in Algorithm~B, the bidirectional algorithm to find strong components.  

It is possible to modify quick search to simulate the effect of a depth-first search, but doing so is not straightforward and produces implementations that are more complicated and less efficient than either Implementation~V or Implementation~A.  One must either store multiple copies of vertices on the stack, the copies being the tips of traversible arcs; or, store only one copy of each vertex, but move it to the top of the stack when a new entering traversible arc is discovered.  The former approach is what Dijkstra used in his implementation of Algorithm~C.  It also appears in a textbook~\cite{KlTa05}.  In both cases the simulation is incomplete, because it does not specify when vertices are postvisited, which is required in all three strong components algorithms and in other applications.  Dijkstra overcame this problem by using an additional stack and the properties of Algorithm~C; the textbook presentation ignores the issue.  The latter approach, of moving vertices to the top of the stack, was proposed by Jiang~\cite{Jiang93} who used it to implement Algorithm~T.  He claimed that the method is I/O-efficient, because vertex scans proceed without interruption. This is true of vertex scans if lists of outgoing arcs are stored in arrays and not in linked lists, but the overall algorithm is I/O-efficient only if main memory is of size $\Omega(n)$.

\end{document}